\documentclass[journal]{IEEEtran}

\usepackage{amsmath,amssymb,amsfonts}
\usepackage{mathtools}
\usepackage{bm}
\usepackage{graphicx}
\usepackage{cite}
\usepackage{booktabs}
\usepackage{multirow}
\usepackage{array}
\usepackage{url}
\usepackage{hyperref}
\usepackage{placeins}

\usepackage{titlesec}

\titlespacing*{\section}
{0pt}{2pt}{2pt}

\titlespacing*{\subsection}
{0pt}{2pt}{1pt}

\titlespacing*{\subsubsection}
{0pt}{1pt}{1pt}

\usepackage{graphicx}
\usepackage{float}
\usepackage{subcaption}

\usepackage{float}

\usepackage{xcolor}
\usepackage{amsmath,amssymb,amsfonts}
\usepackage{amsmath,amssymb,amsfonts}
\usepackage{mathtools}
\usepackage{bm}
\usepackage{graphicx}
\usepackage{cite}
\usepackage{booktabs}
\usepackage{multirow}
\usepackage{array}
\usepackage{xcolor}

\usepackage{hyperref}

\hypersetup{
    colorlinks=true,
    linkcolor=blue,
    citecolor=blue,
    urlcolor=blue
}
\usepackage{tikz}
\usetikzlibrary{arrows.meta}

\makeatletter
\def\@IEEEsectpunct{\quad}
\makeatother

\usepackage{amsthm}

\newtheoremstyle{ieeestyle}
{0.5em}        
{0.5em}        
{\normalfont}  
{}             
{\bfseries}    
{:}            
{0.5em}        
{}             

\theoremstyle{ieeestyle}

\newtheorem{lemma}{Lemma}
\newtheorem{assumption}{Assumption}
\newtheorem{definition}{Definition}
\newtheorem{problem}{Problem}
\newtheorem{remark}{Remark}

\title{Distributed Adaptive Neural Interval Observers
for Unknown Nonlinear Systems}

\begin{document}

\author{
Tien Dat Vu,
My Nguyen Bach, Phuoc Vinh Nguyen
and Minh Doan%
\thanks{
T. D. Vu, P. V. Nguyen and M. Doan are with the Faculty of Mechanical Engineering,
Ho Chi Minh City University of Technology (HCMUT),
Vietnam National University Ho Chi Minh City (VNU-HCM),
Ho Chi Minh City, Vietnam
(e-mails: dat.vuv@hcmut.edu.vn; minh.doan@hcmut.edu.vn).
}%
\thanks{
M. N. Bach is with the Department of Mathematics and Statistics,
University of New Mexico,
Albuquerque, NM, USA
(e-mail: ngocmynguyenbach@gmail.com).
}%
\thanks{
Corresponding author: Minh Doan
(e-mail: minh.doan@hcmut.edu.vn).
}
}

\maketitle



\begin{abstract}
This paper develops a distributed adaptive neural interval observer for
unknown nonlinear systems with locally incomplete measurements. Each
sensor node constructs lower and upper state estimates using its local
output and neighboring observer information, while unknown nonlinear
dynamics are approximated by adaptive neural models. A 
distributed adaptation mechanism guarantees bounded estimation and
weight errors, whereas a cooperative realization preserves the
componentwise interval property. To enhance neural-weight convergence
without requiring persistent excitation, a finite experience-replay
integral concurrent-learning mechanism is incorporated into the
adaptation law. For non-Metzler error dynamics, a
Sylvester-based coordinate transformation is introduced to recover a
Hurwitz--Metzler distributed realization. The theoretical developments
are validated through a nonlinear distributed estimation example.
\end{abstract}

\begin{IEEEkeywords}
Distributed observer, interval observer, neural networks, nonlinear
systems, adaptive estimation, positive systems.
\end{IEEEkeywords}

\section{Introduction}
\label{sec:introduction}

Reliable state estimation becomes difficult when a nonlinear plant is
observed through spatially distributed sensors, each having access only
to partial output information. A conventional distributed observer may
recover the state through local innovations and inter-node information
exchange, but it generally provides a point estimate rather than a
guaranteed state enclosure. Conversely, interval observers construct
\(\underline x(t)\leq x(t)\leq\bar x(t)\) under bounded uncertainty,
with the enclosure typically obtained from cooperative error dynamics
\cite{Gouze2000,Efimov2013,ZhangShen2023}. The problem considered here
is therefore to preserve such an interval simultaneously at every
sensor node when the nonlinear dynamics are unknown and no individual
sensor is required to be detectable.

Distributed state estimation has been extensively studied for linear
systems with partial sensing and network communication
\cite{Mitra2018,Han2019}. More recently, distributed interval observers
have been developed for uncertain LTI and descriptor systems under
joint or collective sensing conditions
\cite{Wang2025Distributed,Wang2026Descriptor}. In parallel, neural
networks have been incorporated into interval observers to approximate
unknown nonlinear functions \cite{Yi2022}. These developments,
however, leave a gap between distributed interval estimation and
adaptive learning of unknown nonlinear dynamics: existing distributed
interval constructions remain essentially model-based, whereas neural
interval observers are primarily centralized and their adaptation is
not derived from the distributed Lyapunov structure. Moreover,
parameter convergence in conventional adaptive schemes typically
requires persistent excitation, which can be restrictive in practical
online estimation. Concurrent-learning methods alleviate this
requirement by repeatedly exploiting sufficiently informative recorded
data, while integral concurrent learning further replaces the
persistent-excitation condition by an online-verifiable finite-data
richness condition \cite{Chowdhary2010,Parikh2019}.

This paper addresses this gap by constructing a distributed adaptive
neural interval observer under collective detectability. The main
contributions are summarized as follows. \emph{First}, upper and lower
neural observers are coupled over the sensing graph so that each node
uses only \(y_i\) and neighboring estimates, while local detectability
of every \((A,C_i)\) is not required. \emph{Second}, the neural
weight-update laws are derived directly from a network Lyapunov
function, yielding uniform ultimate boundedness of both estimation and
weight errors without introducing an independent training loss; a
finite experience-replay integral concurrent-learning mechanism is
further incorporated to strengthen neural-weight convergence without
requiring persistent excitation of the online regressors.
\emph{Third}, interval preservation is treated separately from
stability through a cooperative network realization; when the direct
distributed error dynamics are non-Metzler, a Sylvester-based
coordinate transformation is used to recover a Hurwitz--Metzler
realization and hence the componentwise enclosure
\(\underline x_i(t)\leq x(t)\leq\bar x_i(t)\) for all nodes.

\hypersetup{
    colorlinks=true,
    linkcolor=blue,
    citecolor=blue,
    urlcolor=blue
}


\section{Problem Formulation and Preliminaries}
\label{sec:problem}

\subsection{Graph-Theoretic Preliminaries}
\label{subsec:graph_preliminaries}

The communication among the \(N\) sensing nodes is described by a
weighted directed graph
\(\mathcal G=(\mathcal V,\mathcal E,\mathcal A)\), where
\(\mathcal V:=\{1,\ldots,N\}\) is the node set,
\(\mathcal E\subseteq\mathcal V\times\mathcal V\) is the directed edge
set, and
\(\mathcal A=[a_{ij}]\in\mathbb R^{N\times N}\) is the weighted
adjacency matrix. An edge \((j,i)\in\mathcal E\) means that node \(i\)
receives information from node \(j\), in which case \(a_{ij}>0\);
otherwise \(a_{ij}=0\), and \(a_{ii}=0\). The in-neighbor set of node
\(i\) is
\(\mathcal N_i:=\{j\in\mathcal V:(j,i)\in\mathcal E\}
=\{j\in\mathcal V:a_{ij}>0\}\), and its weighted in-degree is
\(d_i:=\sum_{j=1}^{N}a_{ij}\). The corresponding degree matrix is
\(\mathcal D:=\operatorname{diag}\{d_1,\ldots,d_N\}\), and the graph
Laplacian is defined as
\(\mathcal L:=\mathcal D-\mathcal A\). Hence,
\(\mathcal L_{ii}=d_i\) and
\(\mathcal L_{ij}=-a_{ij}\) for \(i\neq j\), so that
\(\mathcal L\mathbf 1_N=0\). Throughout the paper, information
available to node \(i\) consists of its local measurement together with
the signals received from nodes in \(\mathcal N_i\).

\subsection{Networked Nonlinear System}
\label{subsec:system}

Consider an unknown nonlinear plant monitored by \(N\) distributed
sensing nodes,
\begin{equation}
\dot{x}=Ax+Bf(x,u)+Ed,
\qquad
y_i=C_i x,\quad i\in\mathcal V,
\label{eq:distributed_plant}
\end{equation}
where \(x\in\mathbb R^n\), \(u\in\mathbb R^m\),
\(d\in\mathbb R^q\), \(y_i\in\mathbb R^{p_i}\), and
\(\mathcal V:=\{1,\ldots,N\}\). The matrices
\(A\in\mathbb R^{n\times n}\),
\(B\in\mathbb R^{n\times r}\),
\(E\in\mathbb R^{n\times q}\), and
\(C_i\in\mathbb R^{p_i\times n}\) are known, whereas
\(f:\mathbb R^n\times\mathbb R^m\rightarrow\mathbb R^r\) is an
unknown nonlinear mapping.


In observer-based control, the plant is typically operated under a
feedback policy. In the state-feedback case \(u=\pi(x)\), the unknown
nonlinearity becomes the composite state-dependent mapping
\(f_c(x):=f(x,\pi(x))\). Hence, instead of approximating a general
two-argument map \(f(x,u)\), the neural model may approximate
\(f_c(x)\) directly as
\(f_c(x)=W^{*\top}\phi(x)+\varepsilon(x)\).
Accordingly, the subsequent neural basis functions may be chosen as
functions of the state argument only.

All vector inequalities are understood componentwise. For any matrix
\(M=[m_{jk}]\), define
\(M^{+}:=[\max\{m_{jk},0\}]\) and
\(M^{-}:=[\max\{-m_{jk},0\}]\), such that
\(M=M^{+}-M^{-}\) with \(M^{+},M^{-}\geq0\).

\begin{lemma}
\label{lem:matrix_interval}
For any \(\underline z\leq z\leq\bar z\) and any compatible matrix
\(M\),
\(M^{+}\underline z-M^{-}\bar z
\leq Mz\leq
M^{+}\bar z-M^{-}\underline z\).
\end{lemma}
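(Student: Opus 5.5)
The plan is to use the decomposition \(M=M^{+}-M^{-}\) together with the elementary fact that a matrix with nonnegative entries preserves componentwise order. Concretely, if \(P\geq0\) entrywise and \(a\leq b\) componentwise, then
\[
Pb-Pa=P(b-a)\geq0,
\]
because each component \(\sum_k p_{jk}(b_k-a_k)\) is a sum of products of nonnegative numbers. We will apply this monotonicity once to \(M^{+}\) and once to \(M^{-}\), both of which are nonnegative by the definition stated just before the lemma.

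\emph{Upper bound.} Write \(Mz=M^{+}z-M^{-}z\). Since \(M^{+}\geq0\) and \(z\leq\bar z\), we get \(M^{+}z\leq M^{+}\bar z\). Since \(M^{-}\geq0\) and \(\underline z\leq z\), we get \(M^{-}\underline z\leq M^{-}z\), which is equivalent to \(-M^{-}z\leq -M^{-}\underline z\). Adding these two componentwise inequalities yields
\[
Mz\leq M^{+}\bar z-M^{-}\underline z.
\]

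\emph{Lower bound.} The argument is symmetric. From \(\underline z\leq z\) we obtain \(M^{+}\underline z\leq M^{+}z\), and from \(z\leq\bar z\) we obtain \(-M^{-}\bar z\leq -M^{-}z\). Summing gives
\[
M^{+}\underline z-M^{-}\bar z\leq Mz.
\]

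There is no real obstacle in this argument. The only points needing care are that componentwise inequalities may be added, and that multiplying by a nonnegative matrix preserves them; both follow immediately from the entrywise definitions. The identity \(M=M^{+}-M^{-}\) holds entrywise because \(\max\{m,0\}-\max\{-m,0\}=m\) for every real \(m\). Rectangular (compatible) \(M\) poses no difficulty, since the argument is row by row.
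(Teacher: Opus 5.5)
Your proposal is correct and follows the paper's proof: split \(M=M^{+}-M^{-}\), use that the nonnegative matrices \(M^{+}\) and \(M^{-}\) preserve the order \(\underline z\leq z\leq\bar z\), and add the resulting inequalities. You simply write out the addition step and the entrywise identity \(\max\{m,0\}-\max\{-m,0\}=m\), which the paper leaves implicit.
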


\begin{IEEEproof}
Since \(M=M^{+}-M^{-}\) and \(M^{+},M^{-}\geq0\), the result follows
directly from
\(M^{+}\underline z\leq M^{+}z\leq M^{+}\bar z\) and
\(M^{-}\underline z\leq M^{-}z\leq M^{-}\bar z\).
\end{IEEEproof}

\begin{assumption}
\label{ass:disturbance}
The disturbance is unknown but satisfies
\(\underline d(t)\leq d(t)\leq\bar d(t)\), where
\(\underline d,\bar d\in\mathbb R^q\) are known and locally bounded.
\end{assumption}

From Lemma~\ref{lem:matrix_interval}, define
\(\underline\psi_d:=E^{+}\underline d-E^{-}\bar d\) and
\(\bar\psi_d:=E^{+}\bar d-E^{-}\underline d\). Then
\(\underline\psi_d\leq Ed\leq\bar\psi_d\), so the actual value of
\(d(t)\) is not required online.

\begin{assumption}
\label{ass:compact_region}
The trajectories satisfy
\(x(t)\in\Omega_x\subset\mathbb R^n\) and
\(u(t)\in\Omega_u\subset\mathbb R^m\), where
\(\Omega_x\) and \(\Omega_u\) are compact. The mapping \(f(x,u)\) is
continuous on \(\Omega_x\times\Omega_u\) and locally Lipschitz in
\(x\), uniformly in \(u\) (In observer-based control, the observer is typically operated together
with a stabilizing controller, whose role is also to keep the closed-loop
state trajectory inside the prescribed compact set \(\Omega_x\)).
\end{assumption}

For node \(i\), let
\(\underline x_i,\bar x_i\in\mathbb R^n\) denote the lower and upper
state estimates. The unknown nonlinear term is represented through
two adaptive neural models.

\begin{assumption}
\label{ass:neural_interval}
For each \(i\in\mathcal V\), there exist bounded basis vectors
\(\underline\phi_i
=\underline\phi_i(\underline x_i,\bar x_i,u)\in\mathbb R^{N_i}\),
\(\bar\phi_i
=\bar\phi_i(\underline x_i,\bar x_i,u)\in\mathbb R^{N_i}\),
finite ideal weights
\(\underline W_i^{*},\bar W_i^{*}\in\mathbb R^{N_i\times r}\),
and \(\varepsilon_{fi}\in\mathbb R_{\geq0}^{r}\) such that
\begin{equation}
\underline W_i^{*\top}\underline\phi_i-\varepsilon_{fi}
\leq f(x,u)\leq
\bar W_i^{*\top}\bar\phi_i+\varepsilon_{fi},
\label{eq:distributed_nn_interval}
\end{equation}
whenever
\(\underline x_i\leq x\leq\bar x_i\) and
\((x,u)\in\Omega_x\times\Omega_u\). Moreover,
\(\|\underline\phi_i\|\leq\bar\phi_i^{\ell}\) and
\(\|\bar\phi_i\|\leq\bar\phi_i^{u}\) for some finite
\(\bar\phi_i^{\ell},\bar\phi_i^{u}>0\).
\end{assumption}

The ideal weights in
Assumption~\ref{ass:neural_interval} are used only for analysis and
are unavailable online. Their estimates
\(\underline W_i(t)\) and \(\bar W_i(t)\) will be updated through the
Lyapunov-derived adaptive laws developed in
Section~\ref{sec:observer}. Since \(B\) may be sign-indefinite, the
bounds in \eqref{eq:distributed_nn_interval} will later be propagated
through \(B\) using Lemma~\ref{lem:matrix_interval}.


\subsection{Distributed Sensing and Communication}
\label{subsec:network}


Node \(i\) measures only \(y_i=C_i x\); hence, the individual pair
\((A,C_i)\) is not required to be detectable. Define the aggregate
output matrix
\(C_{\mathcal V}:=\operatorname{col}\{C_1,\ldots,C_N\}\) and
\(y_{\mathcal V}:=\operatorname{col}\{y_1,\ldots,y_N\}
=C_{\mathcal V}x\). The aggregate quantity \(y_{\mathcal V}\) is
introduced only for analysis and is not available at any individual
node.

\begin{definition}[Collective detectability]
\label{def:collective_detectability}
The sensing network is collectively detectable if
\((A,C_{\mathcal V})\) is detectable, i.e., for every
\(\lambda\in\sigma(A)\) with
\(\operatorname{Re}(\lambda)\geq0\),
\(\operatorname{rank}
\operatorname{col}\{\lambda I_n-A,C_{\mathcal V}\}=n\).
\end{definition}

Thus, an unstable mode may be unobservable from some \(C_i\), but it
must be detectable from the sensing network as a whole.

\begin{assumption}
\label{ass:graph}
The graph \(\mathcal G\) is fixed and strongly connected, with
\(a_{ij}>0\) for every \((j,i)\in\mathcal E\).
\end{assumption}

\begin{assumption}
\label{ass:collective_detectability}
The pair \((A,C_{\mathcal V})\) is collectively detectable according
to Definition~\ref{def:collective_detectability}.
\end{assumption}

At node \(i\), the admissible online information is
\(\{y_i,u,\underline d,\bar d,
\underline x_i,\bar x_i,
\underline W_i,\bar W_i,
\underline x_j,\bar x_j:
j\in\mathcal N_i\}\).
Thus, no node requires the full output \(y_{\mathcal V}\) or the true
state \(x\).

\begin{assumption}
\label{ass:communication}
Neighboring information is transmitted continuously without delay or
packet loss. The communication topology is fixed during the interval
of operation.
\end{assumption}


Collective detectability and interval preservation are distinct but complementary requirements in the distributed interval observer design. The former ensures that the sensing network as a whole contains sufficient information to detect the relevant plant modes through the pair \((A,C_{\mathcal V})\), whereas the latter guarantees that the upper and lower estimates preserve the required componentwise ordering over the entire network. In particular, the cooperative error matrix in \eqref{eq:network_error_matrix} must be Metzler, i.e., all of its off-diagonal entries are nonnegative~\cite{Gouze2000,Efimov2013,ZhangShen2023}, in order for the distributed interval observer to preserve the interval enclosure.


\subsection{Problem Statement}
\label{subsec:problem_statement}

For each \(i\in\mathcal V\), define the upper and lower estimation
errors as
\(\bar e_i:=\bar x_i-x\) and
\(\underline e_i:=x-\underline x_i\). Hence,
\(\underline x_i\leq x\leq\bar x_i\) is equivalent to
\(\underline e_i,\bar e_i\geq0\). At the network level, let
\(\bar e:=\operatorname{col}\{\bar e_1,\ldots,\bar e_N\}\) and
\(\underline e
:=\operatorname{col}\{\underline e_1,\ldots,\underline e_N\}\).

Define the neural-weight errors by
\(\tilde{\bar W}_i:=\bar W_i-\bar W_i^{*}\) and
\(\tilde{\underline W}_i
:=\underline W_i-\underline W_i^{*}\).
The observer is initialized such that
\(\underline x_i(0)\leq x(0)\leq\bar x_i(0)\), or equivalently
\(\underline e_i(0),\bar e_i(0)\geq0\), for all
\(i\in\mathcal V\).

\begin{problem}
\label{prob:distributed_interval_observer}
For system~\eqref{eq:distributed_plant} under
Assumptions~\ref{ass:disturbance}--\ref{ass:communication}, construct
at each \(i\in\mathcal V\) distributed upper and lower adaptive neural
observers, together with online laws for
\(\underline W_i\) and \(\bar W_i\), such that all estimation and
weight errors are uniformly ultimately bounded and
\begin{equation}
\underline x_i(t)\leq x(t)\leq\bar x_i(t),
\qquad
\forall i\in\mathcal V,\quad t\geq0.
\label{eq:distributed_interval_objective}
\end{equation}
The design must use only the admissible local and neighboring
information.
\end{problem}

Problem~\ref{prob:distributed_interval_observer} therefore contains
two independent requirements: bounded adaptive estimation dynamics
and preservation of the order
\(\underline e,\bar e\geq0\). The first will be established by a
network Lyapunov analysis, whereas the second follows from a
cooperative realization of the distributed error dynamics (See Section~\ref{sec:observer}, Eq.~\ref{eq:network_error_matrix} and Section~\ref{subsec:cooperative_realization}).

\begin{remark}
\label{rem:stability_vs_interval}
Hurwitz stability alone does not imply interval preservation. A
non-Metzler error matrix may yield bounded or convergent estimation
errors while allowing some components of
\(\underline e\) or \(\bar e\) to become negative. Consequently,
stability and cooperativity must be established separately.
\end{remark}

\begin{remark}
\label{rem:distributed_nature}
Local detectability of every \((A,C_i)\) is not required. The design
instead exploits collective detectability together with graph
connectivity, so that information unavailable from \(y_i\) can be
supplied through neighboring observer interactions.
\end{remark}


\section{Distributed Adaptive Neural Interval Observer}
\label{sec:observer}


\subsection{Observer Architecture}
\label{subsec:observer_architecture}


As discussed in Remark~\ref{rem:stability_vs_interval}, the direct realization in the original
coordinates does not necessarily possess the Metzler structure required
for interval preservation. Therefore, a suitable coordinate
transformation is introduced to construct a Hurwitz--Metzler realization
whenever such a structure is not directly available. For each node
\(i\in\mathcal V\), introduce a nonsingular coordinate transformation
\(T_i\in\mathbb R^{n\times n}\) and define \(z_i:=T_i x\). Let
\(A_i:=T_iAT_i^{-1}\),
\(B_i:=T_iB\),
\(E_i:=T_iE\),
\(C_{zi}:=C_iT_i^{-1}\), and
\(R_{ij}:=T_iT_j^{-1}\).
Hence, the plant expressed in the coordinates of node \(i\) satisfies
\(\dot z_i=A_i z_i+B_i f(x,u)+E_i d\) and
\(y_i=C_{zi}z_i\). If the direct realization in the original coordinates
already satisfies the required Metzler and Hurwitz properties, no
coordinate transformation is necessary and one simply sets
\(T_i=I_n\). Otherwise, \(T_i\) is selected to recover the required
Hurwitz--Metzler realization, as developed in
Subsection~\ref{subsec:cooperative_realization}.

Let the online upper and lower neural models be
\(\bar f_i:=\bar W_i^\top\bar\phi_i+\bar\delta_i\) and
\(\underline f_i:=\underline W_i^\top\underline\phi_i-\underline\delta_i\),
where
\(\bar\delta_i,\underline\delta_i\in\mathbb R_{\geq0}^{r}\) satisfy
\(\bar\delta_i\geq\varepsilon_{fi}\) and
\(\underline\delta_i\geq\varepsilon_{fi}\).
Using the ideal weights of
Assumption~\ref{ass:neural_interval}, define
\(\bar\Delta_{fi}:=
\bar W_i^{*\top}\bar\phi_i+\bar\delta_i-f(x,u)\) and
\(\underline\Delta_{fi}:=
f(x,u)-\underline W_i^{*\top}\underline\phi_i+\underline\delta_i\).
Then
\(\bar\Delta_{fi}\geq0\) and
\(\underline\Delta_{fi}\geq0\).

Since \(B_i\) may be sign-indefinite, define
\(\bar\psi_i:=B_i^{+}\bar f_i-B_i^{-}\underline f_i\) and
\(\underline\psi_i:=B_i^{+}\underline f_i-B_i^{-}\bar f_i\).
Similarly, using Assumption~\ref{ass:disturbance}, define
\(\bar q_i:=E_i^{+}\bar d-E_i^{-}\underline d\) and
\(\underline q_i:=E_i^{+}\underline d-E_i^{-}\bar d\).
Lemma~\ref{lem:matrix_interval} gives
\(\underline q_i\leq E_i d\leq\bar q_i\).

The proposed distributed upper and lower observers are
\begin{equation}
\begin{aligned}
\dot{\bar z}_i
={}&
A_i\bar z_i+\bar\psi_i
+K_i(y_i-C_{zi}\bar z_i)+\bar q_i
\\
&+\kappa\sum_{j\in\mathcal N_i}
a_{ij}\left(R_{ij}\bar z_j-\bar z_i\right),
\\[1mm]
\dot{\underline z}_i
={}&
A_i\underline z_i+\underline\psi_i
+K_i(y_i-C_{zi}\underline z_i)+\underline q_i
\\
&+\kappa\sum_{j\in\mathcal N_i}
a_{ij}\left(R_{ij}\underline z_j-\underline z_i\right),
\end{aligned}
\label{eq:distributed_interval_observer}
\end{equation}
where \(K_i\in\mathbb R^{n\times p_i}\) is the output-injection gain
and \(\kappa>0\) is the distributed coupling gain.

The term \(R_{ij}\bar z_j\) is expressed in the coordinates of node
\(i\), since
\(R_{ij}z_j=T_iT_j^{-1}T_jx=T_ix=z_i\).
Thus, the coupling term vanishes when neighboring estimates coincide
with the true transformed state.

Define the transformed upper and lower errors by
\(\bar\epsilon_i:=\bar z_i-z_i\) and
\(\underline\epsilon_i:=z_i-\underline z_i\).
The measurable output residuals are
\(\bar r_i:=C_{zi}\bar z_i-y_i=C_{zi}\bar\epsilon_i\) and
\(\underline r_i:=y_i-C_{zi}\underline z_i
=C_{zi}\underline\epsilon_i\).

Using
\(\tilde{\bar W}_i:=\bar W_i-\bar W_i^*\) and
\(\tilde{\underline W}_i:=\underline W_i-\underline W_i^*\),
one has
\(\bar f_i-f
=\tilde{\bar W}_i^\top\bar\phi_i+\bar\Delta_{fi}\) and
\(f-\underline f_i
=-\tilde{\underline W}_i^\top\underline\phi_i
+\underline\Delta_{fi}\).

Define the nonnegative disturbance residuals \(\bar\Delta_{di}:=\bar q_i-E_i d\geq0\) and \(\underline\Delta_{di}:=E_i d-\underline q_i\geq0\), and let \(\bar\Delta_i:=B_i^{+}\bar\Delta_{fi}+B_i^{-}\underline\Delta_{fi}+\bar\Delta_{di}\) and \(\underline\Delta_i:=B_i^{+}\underline\Delta_{fi}+B_i^{-}\bar\Delta_{fi}+\underline\Delta_{di}\). Hence, \(\bar\Delta_i\geq0\) and \(\underline\Delta_i\geq0\).


Let \(M_i:=A_i-K_iC_{zi}\) and
\(d_i:=\sum_{j=1}^{N}a_{ij}\). Subtracting the transformed plant
dynamics from \eqref{eq:distributed_interval_observer} gives the upper
interval-error dynamics
\begin{equation}
\begin{aligned}
\dot{\bar\epsilon}_i
={}&
(M_i-\kappa d_iI_n)\bar\epsilon_i
+
\kappa\sum_{j\in\mathcal N_i}
a_{ij}R_{ij}\bar\epsilon_j
\\
&+
B_i^{+}\tilde{\bar W}_i^\top\bar\phi_i
-
B_i^{-}\tilde{\underline W}_i^\top\underline\phi_i
+
\bar\Delta_i ,
\end{aligned}
\label{eq:upper_local_interval_error}
\end{equation}
whereas the lower interval-error dynamics are
\begin{equation}
\begin{aligned}
\dot{\underline\epsilon}_i
={}&
(M_i-\kappa d_iI_n)\underline\epsilon_i
+
\kappa\sum_{j\in\mathcal N_i}
a_{ij}R_{ij}\underline\epsilon_j
\\
&+
B_i^{-}\tilde{\bar W}_i^\top\bar\phi_i
-
B_i^{+}\tilde{\underline W}_i^\top\underline\phi_i
+
\underline\Delta_i .
\end{aligned}
\label{eq:lower_local_interval_error}
\end{equation}

Define \(\bar\epsilon:=\operatorname{col}\{\bar\epsilon_1,\ldots,\bar\epsilon_N\}\), \(\underline\epsilon:=\operatorname{col}\{\underline\epsilon_1,\ldots,\underline\epsilon_N\}\), \(\bar\eta:=\operatorname{col}\{\tilde{\bar W}_1^\top\bar\phi_1,\ldots,\tilde{\bar W}_N^\top\bar\phi_N\}\), and \(\underline\eta:=\operatorname{col}\{\tilde{\underline W}_1^\top\underline\phi_1,\ldots,\tilde{\underline W}_N^\top\underline\phi_N\}\). Moreover, let \(\mathcal B_{+}:=\operatorname{blkdiag}\{B_1^{+},\ldots,B_N^{+}\}\), \(\mathcal B_{-}:=\operatorname{blkdiag}\{B_1^{-},\ldots,B_N^{-}\}\), \(\bar\Delta:=\operatorname{col}\{\bar\Delta_1,\ldots,\bar\Delta_N\}\), and \(\underline\Delta:=\operatorname{col}\{\underline\Delta_1,\ldots,\underline\Delta_N\}\).

Define \(\mathcal R_{\mathcal A}\in\mathbb R^{Nn\times Nn}\) by \([\mathcal R_{\mathcal A}]_{ii}:=0\) and \([\mathcal R_{\mathcal A}]_{ij}:=a_{ij}R_{ij}\) for \(i\neq j\), where \(R_{ij}:=T_iT_j^{-1}\). The network error matrix is therefore
\begin{equation}
\mathcal M_D
:=
\begin{bmatrix}
M_1-\kappa d_1I_n
&
\kappa a_{12}R_{12}
&
\cdots
&
\kappa a_{1N}R_{1N}
\\
\kappa a_{21}R_{21}
&
M_2-\kappa d_2I_n
&
\cdots
&
\kappa a_{2N}R_{2N}
\\
\vdots
&
\vdots
&
\ddots
&
\vdots
\\
\kappa a_{N1}R_{N1}
&
\kappa a_{N2}R_{N2}
&
\cdots
&
M_N-\kappa d_NI_n
\end{bmatrix},
\label{eq:network_error_matrix}
\end{equation}
Equivalently, its blocks satisfy \([\mathcal M_D]_{ii}=M_i-\kappa d_iI_n\) and \([\mathcal M_D]_{ij}=\kappa a_{ij}R_{ij}\) for \(i\neq j\), with \([\mathcal M_D]_{ij}=0\) whenever \(a_{ij}=0\). Thus, this compact representation is exactly equivalent to the full block matrix representation.

The complete error dynamics are therefore
\begin{equation}
\dot{\bar\epsilon}
=
\mathcal M_D\bar\epsilon
+\mathcal B_{+}\bar\eta
-\mathcal B_{-}\underline\eta
+\bar\Delta,
\qquad
\dot{\underline\epsilon}
=
\mathcal M_D\underline\epsilon
+\mathcal B_{-}\bar\eta
-\mathcal B_{+}\underline\eta
+\underline\Delta.
\label{eq:stacked_interval_error}
\end{equation}



\subsection{Cooperative Realization}
\label{subsec:cooperative_realization}

The interval property requires \(\mathcal M_D\) to be Metzler in addition to being Hurwitz. The distributed coupling itself is order preserving whenever \(R_{ij}\geq0\), since its off-diagonal blocks satisfy \(\kappa a_{ij}R_{ij}\geq0\). The remaining issue is therefore the local diagonal realization \(M_i\).

If \(T_i=I_n\) and gains \(K_i\) can be selected such that \(\mathcal M_D=\operatorname{blkdiag}\{A-K_iC_i\}_{i=1}^{N}-\kappa\mathcal L\otimes I_n\) is Hurwitz and Metzler, no coordinate transformation is required. Otherwise, the following Sylvester realization is used.


From
\eqref{eq:upper_local_interval_error}--\eqref{eq:lower_local_interval_error},
The target local error
dynamics are thus characterized by
\(A_i-K_iC_{zi}=M_i\). Using the previously defined transformed
matrices \(A_i=T_iAT_i^{-1}\) and \(C_{zi}=C_iT_i^{-1}\), this
condition becomes
\(T_iAT_i^{-1}-K_iC_iT_i^{-1}=M_i\). Right multiplication by \(T_i\)
gives \(T_iA-K_iC_i=M_iT_i\), and hence
\(T_iA-M_iT_i=K_iC_i\). Therefore, for each node \(i\), one prescribes
a Metzler matrix \(M_i\in\mathbb R^{n\times n}\) and an injection
matrix \(K_i\in\mathbb R^{n\times p_i}\), and obtains the coordinate
transformation \(T_i\) by solving the Sylvester relation
\(T_iA-M_iT_i=K_iC_i\).
\begin{lemma}
\label{lem:distributed_sylvester}
For fixed \(M_i\) and \(K_i\), if \(\sigma(A)\cap\sigma(M_i)=\varnothing\), then the Sylvester relation above admits a unique solution \(T_i\). If \(T_i\) is nonsingular, then with \(L_i:=T_i^{-1}K_i\), one has \(T_i(A-L_iC_i)T_i^{-1}=M_i\).
\end{lemma}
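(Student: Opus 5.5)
The plan is to treat the two claims separately: existence and uniqueness of \(T_i\) through vectorization of the Sylvester relation, and the similarity identity by direct algebra.

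\emph{Existence and uniqueness.} Write the relation \(T_iA-M_iT_i=K_iC_i\) as a linear map \(\mathcal S_i:\mathbb R^{n\times n}\to\mathbb R^{n\times n}\), \(\mathcal S_i(T):=TA-M_iT\), applied to \(T_i\), with right-hand side \(K_iC_i\). Using \(\operatorname{vec}(TA)=(A^\top\otimes I_n)\operatorname{vec}(T)\) and \(\operatorname{vec}(M_iT)=(I_n\otimes M_i)\operatorname{vec}(T)\), the map \(\mathcal S_i\) is represented by the \(n^2\times n^2\) matrix
\[
\mathcal K_i:=A^\top\otimes I_n-I_n\otimes M_i .
\]
The standard Kronecker-sum spectral fact, obtained by simultaneously triangularizing the two commuting summands through Schur forms of \(A^\top\) and \(M_i\), gives \(\sigma(\mathcal K_i)=\{\lambda-\mu:\lambda\in\sigma(A),\ \mu\in\sigma(M_i)\}\). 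Here \(\sigma(A^\top)=\sigma(A)\). The hypothesis \(\sigma(A)\cap\sigma(M_i)=\varnothing\) means \(0\notin\sigma(\mathcal K_i)\), so \(\mathcal K_i\) is invertible. Hence \(\operatorname{vec}(T_i)=\mathcal K_i^{-1}\operatorname{vec}(K_iC_i)\) is the unique solution.

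\emph{Similarity identity.} Assume \(T_i\) is nonsingular and set \(L_i:=T_i^{-1}K_i\), so that \(K_i=T_iL_i\). Substituting into the Sylvester relation gives \(T_iA-M_iT_i=T_iL_iC_i\). Rearranging yields \(T_i(A-L_iC_i)=M_iT_i\), and right multiplication by \(T_i^{-1}\) gives \(T_i(A-L_iC_i)T_i^{-1}=M_i\). This is consistent with the earlier derivation, since \(A_i-K_iC_{zi}=T_iAT_i^{-1}-K_iC_iT_i^{-1}\).

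\emph{Main obstacle.} The only nontrivial ingredient is the spectral characterization of \(\mathcal K_i\). If it needs justifying rather than citing, I would take Schur factorizations \(A^\top=U\Lambda U^*\) and \(M_i=V\Gamma V^*\) with \(\Lambda,\Gamma\) upper triangular. Then \((U\otimes V)^*\mathcal K_i(U\otimes V)=\Lambda\otimes I-I\otimes\Gamma\), which is upper triangular with diagonal entries \(\lambda_k-\mu_l\). Nonsingularity of \(T_i\) is not claimed by the lemma; it is a hypothesis, typically secured by an observability condition on \((M_i,K_i)\) together with \((A,C_i)\), so it needs no argument here.
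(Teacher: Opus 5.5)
Your proposal is correct and follows the paper's proof step for step. You vectorize the Sylvester relation into \((A^\top\otimes I_n-I_n\otimes M_i)\operatorname{vec}(T_i)=\operatorname{vec}(K_iC_i)\), use the Kronecker-sum spectrum \(\{\lambda-\mu\}\) to get invertibility, then substitute \(K_i=T_iL_i\) and right-multiply by \(T_i^{-1}\); your Schur-form justification of the spectral fact is a welcome addition that the paper omits. One correction to your closing aside, which the lemma does not need: the standard necessary condition for a nonsingular \(T_i\) is observability of \((A,C_i)\) together with \emph{controllability} (not observability) of \((M_i,K_i)\).
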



\begin{IEEEproof}
Consider the Sylvester equation \(T_iA-M_iT_i=K_iC_i\). Using the vectorization identity \(\operatorname{vec}(AXB)=(B^\top\otimes A)\operatorname{vec}(X)\), one has \(\operatorname{vec}(T_iA)=(A^\top\otimes I_n)\operatorname{vec}(T_i)\) and \(\operatorname{vec}(M_iT_i)=(I_n\otimes M_i)\operatorname{vec}(T_i)\). Hence, the Sylvester equation can be written as \(\left(A^\top\otimes I_n-I_n\otimes M_i\right)\operatorname{vec}(T_i)=\operatorname{vec}(K_iC_i)\). The eigenvalues of \(A^\top\otimes I_n-I_n\otimes M_i\) are \(\lambda_j(A)-\lambda_k(M_i)\), \(j,k=1,\ldots,n\). Therefore, the condition \(\sigma(A)\cap\sigma(M_i)=\varnothing\) implies \(\lambda_j(A)-\lambda_k(M_i)\neq0\) for all \(j,k\), and consequently \(0\notin\sigma(A^\top\otimes I_n-I_n\otimes M_i)\). Thus, \(A^\top\otimes I_n-I_n\otimes M_i\) is nonsingular, and the Sylvester equation admits the unique solution \(\operatorname{vec}(T_i)=\left(A^\top\otimes I_n-I_n\otimes M_i\right)^{-1}\operatorname{vec}(K_iC_i)\). If \(T_i\) is nonsingular, define \(L_i:=T_i^{-1}K_i\), so that \(K_i=T_iL_i\). From \(T_iA-M_iT_i=K_iC_i\), it follows that \(T_iA-K_iC_i=M_iT_i\). Substituting \(K_i=T_iL_i\) gives \(T_iA-T_iL_iC_i=M_iT_i\), or equivalently \(T_i(A-L_iC_i)=M_iT_i\). Right multiplication by \(T_i^{-1}\) then yields \(T_i(A-L_iC_i)T_i^{-1}=M_i\).
\end{IEEEproof}


\begin{assumption}
\label{ass:cooperative_realization}
There exist \(\kappa>0\), matrices \(M_i\), \(K_i\), and nonsingular Sylvester solutions \(T_i\) such that \(M_i\) is Metzler, \(\sigma(A)\cap\sigma(M_i)=\varnothing\), \(T_iT_j^{-1}\geq0\) for every \((j,i)\in\mathcal E\), and the distributed network matrix \(\mathcal M_D\) is Hurwitz.
\end{assumption}

\begin{remark}[Practical selection of the cooperative realization]
\label{rem:cooperative_design}
Lemma~\ref{lem:distributed_sylvester} and Assumption~\ref{ass:cooperative_realization} provide a practical offline guideline for selecting the realization. A convenient choice is to prescribe each \(M_i\) as a Hurwitz--Metzler matrix with stability margin \(\alpha_i:=-\max_{\mu\in\sigma(M_i)}\operatorname{Re}(\mu)>0\), while enforcing the spectral-separation margin \(\delta_i:=\min_{\lambda\in\sigma(A),\,\mu\in\sigma(M_i)}|\lambda-\mu|>0\). For a selected \(K_i\), Lemma~\ref{lem:distributed_sylvester} then guarantees a unique solution \(T_i\) of \(T_iA-M_iT_i=K_iC_i\). The resulting realization is admissible only if \(\sigma_{\min}(T_i)>0\), \(R_{ij}=T_iT_j^{-1}\geq0\) for every \((j,i)\in\mathcal E\), and the assembled distributed matrix satisfies the network stability margin \(\alpha_D:=-\max_{\lambda\in\sigma(\mathcal M_D)}\operatorname{Re}(\lambda)>0\), as required by Assumption~\ref{ass:cooperative_realization}. Hence, \(M_i\), \(K_i\), and \(\kappa\) may be tuned offline until \(\delta_i>0\), \(\sigma_{\min}(T_i)>0\), \(R_{ij}\geq0\), and \(\alpha_D>0\) hold simultaneously; choosing each \(M_i\) Hurwitz is a convenient sufficient design preference, although Assumption~\ref{ass:cooperative_realization} ultimately requires Hurwitz stability only of the complete matrix \(\mathcal M_D\).
\end{remark}

Notice that the individual matrix \(M_i\) is not required to be Hurwitz. Stability is imposed on the complete matrix \(\mathcal M_D\), so local sensing deficiencies may be compensated by the communication coupling. This is consistent with the collective detectability requirement of Assumption~\ref{ass:collective_detectability}.


Once the transformed interval satisfies \(\underline z_i\leq z_i\leq\bar z_i\), the physical-state interval is reconstructed according to \(\underline x_i=(T_i^{-1})^{+}\underline z_i-(T_i^{-1})^{-}\bar z_i\) and \(\bar x_i=(T_i^{-1})^{+}\bar z_i-(T_i^{-1})^{-}\underline z_i\).

Indeed, since \(x=T_i^{-1}z_i\), application of Lemma~\ref{lem:matrix_interval} directly gives \((T_i^{-1})^{+}\underline z_i-(T_i^{-1})^{-}\bar z_i\leq x\leq(T_i^{-1})^{+}\bar z_i-(T_i^{-1})^{-}\underline z_i\).


\subsection{Distributed Neural Adaptation}
\label{subsec:adaptation}

The adaptive laws are obtained directly from the Lyapunov derivative. No instantaneous neural loss or backpropagation criterion is introduced. For the cooperative realization constructed in Subsection~\ref{subsec:cooperative_realization}, let \(\mathcal M_D\) be Hurwitz and Metzler. Then there exists a diagonal matrix \(\mathcal P=\operatorname{blkdiag}\{P_1,\ldots,P_N\}>0\) and a matrix \(\mathcal Q=\mathcal Q^\top>0\) satisfying \(\mathcal M_D^\top\mathcal P+\mathcal P\mathcal M_D=-\mathcal Q\).

\begin{assumption}
\label{ass:distributed_matching}
For each \(i\in\mathcal V\), there exist matrices \(H_{i+},H_{i-}\in\mathbb R^{p_i\times r}\) such that \(P_iB_i^{+}=C_{zi}^{\top}H_{i+}\) and \(P_iB_i^{-}=C_{zi}^{\top}H_{i-}\).
\end{assumption}

\begin{remark}
\label{rem:distributed_matching}
For \(s\in\{+,-\}\), Assumption~\ref{ass:distributed_matching} is equivalent to \(\operatorname{im}(P_iB_i^{s})\subseteq\operatorname{im}(C_{zi}^{\top})\), or, equivalently, \(\bigl[I-C_{zi}^{\top}(C_{zi}^{\top})^{\dagger}\bigr]P_iB_i^{s}=0\). Whenever this condition holds, one admissible choice is \(H_{is}=(C_{zi}^{\top})^{\dagger}P_iB_i^{s}\). Since \(P_i>0\), one further has \(\operatorname{rank}(B_i^{s})=\operatorname{rank}(P_iB_i^{s})\le\operatorname{rank}(C_{zi})\le p_i\). Hence, Assumption~\ref{ass:distributed_matching} imposes a structural compatibility condition between the uncertainty channels \(B_i^{\pm}\) and the local output subspace \(\operatorname{im}(C_{zi}^{\top})\), and is independent of collective detectability.
\end{remark}

Since \(\bar r_i=C_{zi}\bar\epsilon_i\) and \(\underline r_i=C_{zi}\underline\epsilon_i\), the matching relation above gives \(\bar\epsilon_i^\top P_iB_i^{+}=\bar r_i^\top H_{i+}\), \(\bar\epsilon_i^\top P_iB_i^{-}=\bar r_i^\top H_{i-}\), \(\underline\epsilon_i^\top P_iB_i^{+}=\underline r_i^\top H_{i+}\), and \(\underline\epsilon_i^\top P_iB_i^{-}=\underline r_i^\top H_{i-}\).

Define the measurable adaptive signals \(\xi_{i+}:=H_{i+}^{\top}\bar r_i+H_{i-}^{\top}\underline r_i\) and \(\xi_{i-}:=-H_{i-}^{\top}\bar r_i-H_{i+}^{\top}\underline r_i\).

Consider the composite Lyapunov function \(V:=\frac{1}{2}\bar\epsilon^\top\mathcal P\bar\epsilon+\frac{1}{2}\underline\epsilon^\top\mathcal P\underline\epsilon+\frac{1}{2}\sum_{i=1}^{N}\operatorname{tr}\!\left(\tilde{\bar W}_i^\top\Gamma_{i+}^{-1}\tilde{\bar W}_i\right)+\frac{1}{2}\sum_{i=1}^{N}\operatorname{tr}\!\left(\tilde{\underline W}_i^\top\Gamma_{i-}^{-1}\tilde{\underline W}_i\right)\), where \(\Gamma_{i+}=\Gamma_{i+}^\top>0\) and \(\Gamma_{i-}=\Gamma_{i-}^\top>0\). Since the ideal weights are constant, \(\dot{\tilde{\bar W}}_i=\dot{\bar W}_i\) and \(\dot{\tilde{\underline W}}_i=\dot{\underline W}_i\). Hence, \(\dot V=\bar\epsilon^\top\mathcal P\dot{\bar\epsilon}+\underline\epsilon^\top\mathcal P\dot{\underline\epsilon}+\sum_{i=1}^{N}\operatorname{tr}\!\left(\tilde{\bar W}_i^\top\Gamma_{i+}^{-1}\dot{\bar W}_i\right)+\sum_{i=1}^{N}\operatorname{tr}\!\left(\tilde{\underline W}_i^\top\Gamma_{i-}^{-1}\dot{\underline W}_i\right)\).

Using the stacked interval-error dynamics in~(6), \(\dot{\bar\epsilon}=\mathcal M_D\bar\epsilon+\mathcal B_{+}\bar\eta-\mathcal B_{-}\underline\eta+\bar\Delta\) and \(\dot{\underline\epsilon}=\mathcal M_D\underline\epsilon+\mathcal B_{-}\bar\eta-\mathcal B_{+}\underline\eta+\underline\Delta\), we obtain \(\dot V=\bar\epsilon^\top\mathcal P\left(\mathcal M_D\bar\epsilon+\mathcal B_{+}\bar\eta-\mathcal B_{-}\underline\eta+\bar\Delta\right)+\underline\epsilon^\top\mathcal P\left(\mathcal M_D\underline\epsilon+\mathcal B_{-}\bar\eta-\mathcal B_{+}\underline\eta+\underline\Delta\right)+\sum_{i=1}^{N}\operatorname{tr}\!\left(\tilde{\bar W}_i^\top\Gamma_{i+}^{-1}\dot{\bar W}_i\right)+\sum_{i=1}^{N}\operatorname{tr}\!\left(\tilde{\underline W}_i^\top\Gamma_{i-}^{-1}\dot{\underline W}_i\right)\).

Expanding the preceding expression gives \(\dot V=\bar\epsilon^\top\mathcal P\mathcal M_D\bar\epsilon+\underline\epsilon^\top\mathcal P\mathcal M_D\underline\epsilon+\bar\epsilon^\top\mathcal P\bar\Delta+\underline\epsilon^\top\mathcal P\underline\Delta+\bar\epsilon^\top\mathcal P\mathcal B_{+}\bar\eta+\underline\epsilon^\top\mathcal P\mathcal B_{-}\bar\eta-\bar\epsilon^\top\mathcal P\mathcal B_{-}\underline\eta-\underline\epsilon^\top\mathcal P\mathcal B_{+}\underline\eta+\sum_{i=1}^{N}\operatorname{tr}\!\left(\tilde{\bar W}_i^\top\Gamma_{i+}^{-1}\dot{\bar W}_i\right)+\sum_{i=1}^{N}\operatorname{tr}\!\left(\tilde{\underline W}_i^\top\Gamma_{i-}^{-1}\dot{\underline W}_i\right)\).
Since
\(\mathcal M_D^\top\mathcal P+\mathcal P\mathcal M_D=-Q\),
the first two terms satisfy
\(\bar\epsilon^\top\mathcal P\mathcal M_D\bar\epsilon
=-\frac{1}{2}\bar\epsilon^\top Q\bar\epsilon\) and
\(\underline\epsilon^\top\mathcal P\mathcal M_D\underline\epsilon
=-\frac{1}{2}\underline\epsilon^\top Q\underline\epsilon\).

Moreover, using the block structures of \(\mathcal P\), \(\mathcal B_{\pm}\), \(\bar\eta\), and \(\underline\eta\), the neural terms become \(\bar\epsilon^\top\mathcal P\mathcal B_{+}\bar\eta+\underline\epsilon^\top\mathcal P\mathcal B_{-}\bar\eta=\sum_{i=1}^{N}\left(\bar\epsilon_i^\top P_iB_i^{+}\tilde{\bar W}_i^\top\bar\phi_i+\underline\epsilon_i^\top P_iB_i^{-}\tilde{\bar W}_i^\top\bar\phi_i\right)\), and \(-\bar\epsilon^\top\mathcal P\mathcal B_{-}\underline\eta-\underline\epsilon^\top\mathcal P\mathcal B_{+}\underline\eta=-\sum_{i=1}^{N}\left(\bar\epsilon_i^\top P_iB_i^{-}\tilde{\underline W}_i^\top\underline\phi_i+\underline\epsilon_i^\top P_iB_i^{+}\tilde{\underline W}_i^\top\underline\phi_i\right)\).

For the terms associated with \(\tilde{\bar W}_i\), use \(P_iB_i^{+}=C_{zi}^\top H_{i+}\) and \(P_iB_i^{-}=C_{zi}^\top H_{i-}\) to obtain \(\bar\epsilon_i^\top P_iB_i^{+}\tilde{\bar W}_i^\top\bar\phi_i+\underline\epsilon_i^\top P_iB_i^{-}\tilde{\bar W}_i^\top\bar\phi_i=\bar\epsilon_i^\top C_{zi}^\top H_{i+}\tilde{\bar W}_i^\top\bar\phi_i+\underline\epsilon_i^\top C_{zi}^\top H_{i-}\tilde{\bar W}_i^\top\bar\phi_i\). Since \(\bar r_i=C_{zi}\bar\epsilon_i\) and \(\underline r_i=C_{zi}\underline\epsilon_i\), \(\bar\epsilon_i^\top C_{zi}^\top H_{i+}\tilde{\bar W}_i^\top\bar\phi_i+\underline\epsilon_i^\top C_{zi}^\top H_{i-}\tilde{\bar W}_i^\top\bar\phi_i=\bar r_i^\top H_{i+}\tilde{\bar W}_i^\top\bar\phi_i+\underline r_i^\top H_{i-}\tilde{\bar W}_i^\top\bar\phi_i=\left(\bar r_i^\top H_{i+}+\underline r_i^\top H_{i-}\right)\tilde{\bar W}_i^\top\bar\phi_i=\left(H_{i+}^\top\bar r_i+H_{i-}^\top\underline r_i\right)^\top\tilde{\bar W}_i^\top\bar\phi_i\). Define \(\xi_{i+}:=H_{i+}^\top\bar r_i+H_{i-}^\top\underline r_i\). Then \(\bar\epsilon_i^\top P_iB_i^{+}\tilde{\bar W}_i^\top\bar\phi_i+\underline\epsilon_i^\top P_iB_i^{-}\tilde{\bar W}_i^\top\bar\phi_i=\xi_{i+}^\top\tilde{\bar W}_i^\top\bar\phi_i=\operatorname{tr}\!\left(\tilde{\bar W}_i^\top\bar\phi_i\xi_{i+}^\top\right)\).

Cancellation of this term requires \(\operatorname{tr}\!\left(\tilde{\bar W}_i^\top\Gamma_{i+}^{-1}\dot{\bar W}_i^{\,c}\right)=-\operatorname{tr}\!\left(\tilde{\bar W}_i^\top\bar\phi_i\xi_{i+}^\top\right)\), which is achieved by \(\dot{\bar W}_i^{\,c}=-\Gamma_{i+}\bar\phi_i\xi_{i+}^\top\). To suppress parameter drift, add the leakage term \(-\sigma_{i+}\Gamma_{i+}\bar W_i\), with \(\sigma_{i+}>0\), and obtain
\begin{equation}
\label{update1}
\dot{\bar W}_i
=
-\Gamma_{i+}\bar\phi_i\xi_{i+}^\top
-
\sigma_{i+}\Gamma_{i+}\bar W_i.
\end{equation}

Substitution gives \(\operatorname{tr}\!\left(\tilde{\bar W}_i^\top\Gamma_{i+}^{-1}\dot{\bar W}_i\right)=-\operatorname{tr}\!\left(\tilde{\bar W}_i^\top\bar\phi_i\xi_{i+}^\top\right)-\sigma_{i+}\operatorname{tr}\!\left(\tilde{\bar W}_i^\top\bar W_i\right)\).

For the terms associated with \(\tilde{\underline W}_i\), \(-\bar\epsilon_i^\top P_iB_i^{-}\tilde{\underline W}_i^\top\underline\phi_i-\underline\epsilon_i^\top P_iB_i^{+}\tilde{\underline W}_i^\top\underline\phi_i=-\bar\epsilon_i^\top C_{zi}^\top H_{i-}\tilde{\underline W}_i^\top\underline\phi_i-\underline\epsilon_i^\top C_{zi}^\top H_{i+}\tilde{\underline W}_i^\top\underline\phi_i=-\bar r_i^\top H_{i-}\tilde{\underline W}_i^\top\underline\phi_i-\underline r_i^\top H_{i+}\tilde{\underline W}_i^\top\underline\phi_i=\left(-\bar r_i^\top H_{i-}-\underline r_i^\top H_{i+}\right)\tilde{\underline W}_i^\top\underline\phi_i=\left(-H_{i-}^\top\bar r_i-H_{i+}^\top\underline r_i\right)^\top\tilde{\underline W}_i^\top\underline\phi_i\). Define \(\xi_{i-}:=-H_{i-}^\top\bar r_i-H_{i+}^\top\underline r_i\). Therefore, \(-\bar\epsilon_i^\top P_iB_i^{-}\tilde{\underline W}_i^\top\underline\phi_i-\underline\epsilon_i^\top P_iB_i^{+}\tilde{\underline W}_i^\top\underline\phi_i=\xi_{i-}^\top\tilde{\underline W}_i^\top\underline\phi_i=\operatorname{tr}\!\left(\tilde{\underline W}_i^\top\underline\phi_i\xi_{i-}^\top\right)\).

Thus, \(\dot{\underline W}_i^{\,c}=-\Gamma_{i-}\underline\phi_i\xi_{i-}^\top\). Adding \(-\sigma_{i-}\Gamma_{i-}\underline W_i\), with \(\sigma_{i-}>0\), gives
\begin{equation}
\label{update2}
\dot{\underline W}_i
=
-\Gamma_{i-}\underline\phi_i\xi_{i-}^\top
-
\sigma_{i-}\Gamma_{i-}\underline W_i.
\end{equation}


To remove the persistent-excitation requirement from the online regressor, we augment the instantaneous adaptation with a finite experience-replay stack. For each node \(i\), let \(\mathcal H_i=\{(\Theta_{i,k},Y_{i,k})\}_{k=1}^{M_i}\) denote a set of previously recorded informative data, where \(\Theta_{i,k}:=\int_{t_k-T_I}^{t_k}\phi_i(\tau)\,d\tau\) and \(Y_{i,k}:=z_i(t_k)-z_i(t_k-T_I)-\int_{t_k-T_I}^{t_k}\mathcal F_i(\tau)\,d\tau\), with \(\mathcal F_i\) collecting all known terms of the transformed dynamics. Under the neural representation, \(Y_{i,k}=W_i^{*\top}\Theta_{i,k}+\varepsilon_{i,k}\), where \(\varepsilon_{i,k}\) denotes the integral approximation residual. Define the replay information matrix \(\mathcal G_i:=\frac{1}{M_i}\sum_{k=1}^{M_i}\Theta_{i,k}\Theta_{i,k}^{\top}\). Instead of requiring persistent excitation of \(\phi_i(t)\) for all \(t\), it is sufficient that the stored data satisfy the finite-rank condition \(\lambda_{\min}(\mathcal G_i)\geq \underline g_i>0\). Once such a stack has been collected, the informative samples are retained and repeatedly reused in the adaptive law.

For the lower-weight estimate, define the replay residual \(\underline{\epsilon}_{i,k}:=\Theta_{i,k}^{\top}\underline W_i-Y_{i,k}\), and for the upper-weight estimate define \(\bar{\epsilon}_{i,k}:=\Theta_{i,k}^{\top}\bar W_i-Y_{i,k}\). The corresponding composite experience-replay corrections are \(\underline{\mathcal C}_i:=\frac{1}{M_i}\sum_{k=1}^{M_i}\Theta_{i,k}\underline{\epsilon}_{i,k}^{\top}\) and \(\bar{\mathcal C}_i:=\frac{1}{M_i}\sum_{k=1}^{M_i}\Theta_{i,k}\bar{\epsilon}_{i,k}^{\top}\). Therefore, the pure online laws in (9)--(10) are replaced by
\begin{equation}
\dot{\bar W}_i
=
-\Gamma_{i+}\phi_i\xi_{i+}^{\top}
-\Gamma_{i+}k_{r,i}\bar{\mathcal C}_i,
\label{eq:upper_composite_er}
\end{equation}
and
\begin{equation}
\dot{\underline W}_i
=
-\Gamma_{i-}\phi_i\xi_{i-}^{\top}
-\Gamma_{i-}k_{r,i}\underline{\mathcal C}_i,
\label{eq:lower_composite_er}
\end{equation}
where \(k_{r,i}>0\) is the replay gain. Since \(Y_{i,k}=W_i^{*\top}\Theta_{i,k}\) in the ideal approximation case, \(\bar{\epsilon}_{i,k}=\Theta_{i,k}^{\top}\bar{\tilde W}_i\) and \(\underline{\epsilon}_{i,k}=\Theta_{i,k}^{\top}\underline{\tilde W}_i\), so that \(\bar{\mathcal C}_i=\mathcal G_i\bar{\tilde W}_i\) and \(\underline{\mathcal C}_i=\mathcal G_i\underline{\tilde W}_i\). Consequently, the replay terms generate the negative-definite contributions \(-k_{r,i}\operatorname{tr}\!\left(\bar{\tilde W}_i^{\top}\mathcal G_i\bar{\tilde W}_i\right)\) and \(-k_{r,i}\operatorname{tr}\!\left(\underline{\tilde W}_i^{\top}\mathcal G_i\underline{\tilde W}_i\right)\), respectively, with \(\operatorname{tr}\!\left(\tilde W_i^{\top}\mathcal G_i\tilde W_i\right)\geq \underline g_i\|\tilde W_i\|_F^2\). Hence, parameter convergence is enforced by a finite informative replay stack rather than by persistent excitation of the online trajectory. In the presence of bounded approximation residuals \(\varepsilon_{i,k}\), the same mechanism yields convergence of the neural-weight errors to a residual neighborhood whose size is determined by the replay-data approximation error.

\makeatletter
\@ifundefined{theorem}{\newtheorem{theorem}{Theorem}}{}
\makeatother




\section{Stability and Interval-Enclosure Analysis}
\label{sec:stability_analysis}


\subsection{Boundedness Analysis}
\label{subsec:boundedness}

We first establish boundedness independently of interval preservation. Let \(\bar f_{\Omega}:=\sup_{(x,u)\in\Omega_x\times\Omega_u}\|f(x,u)\|<\infty\) and \(\bar d_{\Omega}:=\sup_{t\geq0}\max\{\|\underline d(t)\|,\|\bar d(t)\|\}<\infty\). By Assumptions~\ref{ass:disturbance}--\ref{ass:neural_interval}, all quantities entering the ideal neural and disturbance residuals are bounded on the prescribed operating region. For each \(i\in\mathcal V\), define \(\Delta_{fi}^{M}:=\max\{\|\bar W_i^{*}\|_{F}\bar\phi_i^{u}+\|\bar\delta_i\|+\bar f_{\Omega},\|\underline W_i^{*}\|_{F}\bar\phi_i^{\ell}+\|\underline\delta_i\|+\bar f_{\Omega}\}\). Further, let \(\Delta_{di}^{M}:=\sup_{t\geq0}\max\{\|\bar q_i-E_i d\|,\|E_i d-\underline q_i\|\}\). Assumption~\ref{ass:disturbance} and Lemma~\ref{lem:matrix_interval} imply \(0\leq\Delta_{di}^{M}<\infty\). Define \(\Delta_i^{M}:=(\|B_i^{+}\|+\|B_i^{-}\|)\Delta_{fi}^{M}+\Delta_{di}^{M}\) and \(\Delta_D:=(\sum_{i=1}^{N}(\Delta_i^{M})^2)^{1/2}\).

\begin{lemma}
\label{lem:distributed_residual_bound}
Under Assumptions~\ref{ass:disturbance}--\ref{ass:neural_interval}, the residuals in \eqref{eq:stacked_interval_error} satisfy \(\|\bar\Delta_i(t)\|\leq\Delta_i^{M}\), \(\|\underline\Delta_i(t)\|\leq\Delta_i^{M}\), and consequently \(\|\bar\Delta(t)\|\leq\Delta_D\) and \(\|\underline\Delta(t)\|\leq\Delta_D\).
\end{lemma}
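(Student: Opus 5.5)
The plan is to bound each residual componentwise from its definition and then stack the bounds. Recall that \(\bar\Delta_i=B_i^{+}\bar\Delta_{fi}+B_i^{-}\underline\Delta_{fi}+\bar\Delta_{di}\) and \(\underline\Delta_i=B_i^{+}\underline\Delta_{fi}+B_i^{-}\bar\Delta_{fi}+\underline\Delta_{di}\). By the triangle inequality and submultiplicativity of the induced norm,
\(\|\bar\Delta_i\|\leq\|B_i^{+}\|\,\|\bar\Delta_{fi}\|+\|B_i^{-}\|\,\|\underline\Delta_{fi}\|+\|\bar\Delta_{di}\|\),
and the same form holds for \(\|\underline\Delta_i\|\) with the two neural residuals interchanged. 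It therefore suffices to show two things: \(\max\{\|\bar\Delta_{fi}\|,\|\underline\Delta_{fi}\|\}\leq\Delta_{fi}^{M}\), and \(\max\{\|\bar\Delta_{di}\|,\|\underline\Delta_{di}\|\}\leq\Delta_{di}^{M}\).

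For the neural residuals, I would expand \(\bar\Delta_{fi}=\bar W_i^{*\top}\bar\phi_i+\bar\delta_i-f(x,u)\). Using \(\|\bar W_i^{*\top}\bar\phi_i\|\leq\|\bar W_i^{*}\|_F\|\bar\phi_i\|\) (spectral norm is at most Frobenius norm) and \(\|\bar\phi_i\|\leq\bar\phi_i^{u}\) from Assumption~\ref{ass:neural_interval}, this gives
\(\|\bar\Delta_{fi}\|\leq\|\bar W_i^{*}\|_F\bar\phi_i^{u}+\|\bar\delta_i\|+\bar f_\Omega\).
Here \(\|f(x,u)\|\leq\bar f_\Omega\) is finite by continuity of \(f\) on the compact set \(\Omega_x\times\Omega_u\) together with \(x(t)\in\Omega_x\) and \(u(t)\in\Omega_u\) (Assumption~\ref{ass:compact_region}). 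The lower residual \(\underline\Delta_{fi}\) is handled identically with \(\bar\phi_i^{\ell}\), and both bounds are dominated by \(\Delta_{fi}^{M}\).

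The disturbance residuals are bounded by \(\Delta_{di}^{M}\) directly from its definition as a supremum. Its finiteness follows from Assumption~\ref{ass:disturbance} and Lemma~\ref{lem:matrix_interval}: \(\underline q_i\leq E_id\leq\bar q_i\), so \(0\leq\bar q_i-E_id\leq\bar q_i-\underline q_i=|E_i|(\bar d-\underline d)\), whose norm is controlled by \(2\|\,|E_i|\,\|\bar d_\Omega\). The same estimate holds for \(\underline\Delta_{di}\). Combining these bounds yields \(\|\bar\Delta_i\|,\|\underline\Delta_i\|\leq(\|B_i^{+}\|+\|B_i^{-}\|)\Delta_{fi}^{M}+\Delta_{di}^{M}=\Delta_i^{M}\). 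For the stacked vectors, \(\|\bar\Delta\|^2=\sum_i\|\bar\Delta_i\|^2\leq\sum_i(\Delta_i^M)^2=\Delta_D^2\), and likewise for \(\underline\Delta\).

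The only delicate point is the time-uniformity of \(\bar d_\Omega\). Assumption~\ref{ass:disturbance} only states that the bounds are locally bounded, so I would rely on the paper's stated finiteness of \(\bar d_\Omega\) when defining it, which amounts to global boundedness of \(\underline d,\bar d\). I would also note that \(\bar\phi_i\) and \(\underline\phi_i\) are bounded regardless of the estimator state, so no a priori boundedness of the observer trajectory is needed.
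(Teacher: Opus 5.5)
Your proposal is correct and follows the paper's proof: you bound \(\|\bar\Delta_{fi}\|,\|\underline\Delta_{fi}\|\) by \(\Delta_{fi}^{M}\) and the disturbance residuals by \(\Delta_{di}^{M}\) with the triangle inequality, combine these through the definitions of \(\bar\Delta_i,\underline\Delta_i\), and stack to obtain \(\Delta_D\). You also add details the paper only asserts: finiteness of \(\Delta_{di}^{M}\) via \(\bar q_i-\underline q_i=|E_i|(\bar d-\underline d)\), and the correct observation that this requires \(\underline d,\bar d\) to be bounded uniformly on \([0,\infty)\), not merely locally bounded as Assumption~\ref{ass:disturbance} states.
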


\begin{IEEEproof}
From \(\bar\Delta_{fi}=\bar W_i^{*\top}\bar\phi_i+\bar\delta_i-f(x,u)\) and \(\underline\Delta_{fi}=f(x,u)-\underline W_i^{*\top}\underline\phi_i+\underline\delta_i\), one obtains \(\|\bar\Delta_{fi}\|,\|\underline\Delta_{fi}\|\leq\Delta_{fi}^{M}\). Using \(\bar\Delta_i=B_i^{+}\bar\Delta_{fi}+B_i^{-}\underline\Delta_{fi}+\bar\Delta_{di}\) and \(\underline\Delta_i=B_i^{+}\underline\Delta_{fi}+B_i^{-}\bar\Delta_{fi}+\underline\Delta_{di}\) gives \(\|\bar\Delta_i\|,\|\underline\Delta_i\|\leq\Delta_i^{M}\). Stacking the local bounds yields \(\|\bar\Delta\|\leq\Delta_D\) and \(\|\underline\Delta\|\leq\Delta_D\).
\end{IEEEproof}


For the composite integral concurrent-learning terms, define the stored integral regressors \(\bar\Theta_{i,k}:=\int_{t_k-T_I}^{t_k}\bar\phi_i(\tau)\,d\tau\) and \(\underline\Theta_{i,k}:=\int_{t_k-T_I}^{t_k}\underline\phi_i(\tau)\,d\tau\), together with the integral regression outputs \(\bar Y_{i,k}:=\bar W_i^{*\top}\bar\Theta_{i,k}+\bar\varepsilon_{i,k}\) and \(\underline Y_{i,k}:=\underline W_i^{*\top}\underline\Theta_{i,k}+\underline\varepsilon_{i,k}\), where \(\bar\varepsilon_{i,k}\) and \(\underline\varepsilon_{i,k}\) collect the bounded integral approximation residuals. Define \(\bar e_{i,k}:=\bar W_i^\top\bar\Theta_{i,k}-\bar Y_{i,k}\) and \(\underline e_{i,k}:=\underline W_i^\top\underline\Theta_{i,k}-\underline Y_{i,k}\), and let \(\bar{\mathcal C}_i:=M_i^{-1}\sum_{k=1}^{M_i}\bar\Theta_{i,k}\bar e_{i,k}^{\top}\) and \(\underline{\mathcal C}_i:=M_i^{-1}\sum_{k=1}^{M_i}\underline\Theta_{i,k}\underline e_{i,k}^{\top}\). Further define the finite-data information matrices \(\bar{\mathcal G}_i:=M_i^{-1}\sum_{k=1}^{M_i}\bar\Theta_{i,k}\bar\Theta_{i,k}^{\top}\) and \(\underline{\mathcal G}_i:=M_i^{-1}\sum_{k=1}^{M_i}\underline\Theta_{i,k}\underline\Theta_{i,k}^{\top}\). The finite-data richness condition requires only \(\lambda_{\min}(\bar{\mathcal G}_i)\geq\underline g_{i+}>0\) and \(\lambda_{\min}(\underline{\mathcal G}_i)\geq\underline g_{i-}>0\) after a finite collection interval and therefore does not require persistent excitation of the online trajectory. Since \(\bar W_i=\bar W_i^*+\bar{\tilde W}_i\) and \(\underline W_i=\underline W_i^*+\underline{\tilde W}_i\), one has \(\bar e_{i,k}=\bar{\tilde W}_i^\top\bar\Theta_{i,k}-\bar\varepsilon_{i,k}\) and \(\underline e_{i,k}=\underline{\tilde W}_i^\top\underline\Theta_{i,k}-\underline\varepsilon_{i,k}\), and hence \(\bar{\mathcal C}_i=\bar{\mathcal G}_i\bar{\tilde W}_i-\bar{\mathcal E}_i\) and \(\underline{\mathcal C}_i=\underline{\mathcal G}_i\underline{\tilde W}_i-\underline{\mathcal E}_i\), where \(\bar{\mathcal E}_i:=M_i^{-1}\sum_{k=1}^{M_i}\bar\Theta_{i,k}\bar\varepsilon_{i,k}^{\top}\) and \(\underline{\mathcal E}_i:=M_i^{-1}\sum_{k=1}^{M_i}\underline\Theta_{i,k}\underline\varepsilon_{i,k}^{\top}\). Let \(\|\bar{\mathcal E}_i\|_F\leq\bar\rho_i\) and \(\|\underline{\mathcal E}_i\|_F\leq\underline\rho_i\).


\begin{theorem}
\label{thm:distributed_uub}
Suppose Assumptions~\ref{ass:disturbance}--\ref{ass:communication}, \ref{ass:distributed_matching}, and \ref{ass:cooperative_realization} hold, and suppose the finite replay stacks satisfy \(\bar{\mathcal G}_i\succeq\underline g_{i+}I\) and \(\underline{\mathcal G}_i\succeq\underline g_{i-}I\) for all \(i\in\mathcal V\). Under the observer \eqref{eq:distributed_interval_observer} and the composite ICL adaptive laws \eqref{update1}--\eqref{update2}, the transformed estimation errors \(\bar\epsilon,\underline\epsilon\) and all neural weight errors \(\bar{\tilde W}_i,\underline{\tilde W}_i\) are uniformly ultimately bounded. Consequently, the transformed interval widths, the physical estimation errors, and the physical interval widths are bounded. Moreover, in the ideal residual-free case with \(\bar\Delta=\underline\Delta=0\), \(\bar{\mathcal E}_i=\underline{\mathcal E}_i=0\), and \(\sigma_{i+}=\sigma_{i-}=0\), the finite replay richness condition is sufficient to drive the neural weight errors to zero without persistent excitation of the online regressors.
\end{theorem}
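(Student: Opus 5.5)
We will use the composite Lyapunov function \(V\) of Subsection~\ref{subsec:adaptation}. The adaptive law is taken as the instantaneous cancelling term, the leakage \(-\sigma_{i\pm}\Gamma_{i\pm}W\) and the replay correction \(-k_{r,i}\Gamma_{i\pm}\mathcal C_i\) together, so that the ideal case \(\sigma_{i\pm}=0\) is covered by the same computation. By Assumption~\ref{ass:cooperative_realization}, \(\mathcal M_D\) is Hurwitz and Metzler, so the block-diagonal \(\mathcal P>0\) with \(\mathcal M_D^\top\mathcal P+\mathcal P\mathcal M_D=-\mathcal Q\) exists. Assumption~\ref{ass:distributed_matching} then makes the expansion of \(\dot V\) carried out before \eqref{update1} valid verbatim. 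The cross terms \(\operatorname{tr}(\tilde{\bar W}_i^\top\bar\phi_i\xi_{i+}^\top)\) and \(\operatorname{tr}(\tilde{\underline W}_i^\top\underline\phi_i\xi_{i-}^\top)\) cancel exactly. This leaves
\[
\dot V=-\tfrac12\bar\epsilon^\top\mathcal Q\bar\epsilon-\tfrac12\underline\epsilon^\top\mathcal Q\underline\epsilon+\bar\epsilon^\top\mathcal P\bar\Delta+\underline\epsilon^\top\mathcal P\underline\Delta-\textstyle\sum_i\bigl[\sigma_{i+}\operatorname{tr}(\tilde{\bar W}_i^\top\bar W_i)+\sigma_{i-}\operatorname{tr}(\tilde{\underline W}_i^\top\underline W_i)\bigr]-\textstyle\sum_i k_{r,i}\bigl[\operatorname{tr}(\tilde{\bar W}_i^\top\bar{\mathcal C}_i)+\operatorname{tr}(\tilde{\underline W}_i^\top\underline{\mathcal C}_i)\bigr].
\]

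Next I bound each group of terms. For the residual terms, Lemma~\ref{lem:distributed_residual_bound} and Young's inequality give \(\bar\epsilon^\top\mathcal P\bar\Delta\le\frac14\lambda_{\min}(\mathcal Q)\|\bar\epsilon\|^2+\lambda_{\min}(\mathcal Q)^{-1}\|\mathcal P\|^2\Delta_D^2\), and the same bound holds for the lower error. For the leakage terms, the standard identity \(\operatorname{tr}(\tilde W^\top W)=\|\tilde W\|_F^2+\operatorname{tr}(\tilde W^\top W^*)\ge\frac12\|\tilde W\|_F^2-\frac12\|W^*\|_F^2\) applies. For the replay terms, the decomposition \(\bar{\mathcal C}_i=\bar{\mathcal G}_i\tilde{\bar W}_i-\bar{\mathcal E}_i\) gives
\[
-\operatorname{tr}(\tilde{\bar W}_i^\top\bar{\mathcal C}_i)\le-\underline g_{i+}\|\tilde{\bar W}_i\|_F^2+\bar\rho_i\|\tilde{\bar W}_i\|_F\le-\tfrac12\underline g_{i+}\|\tilde{\bar W}_i\|_F^2+\bar\rho_i^2/(2\underline g_{i+}),
\]
and the analogous bound holds for the lower weights. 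Collecting these yields \(\dot V\le-c_1(\|\bar\epsilon\|^2+\|\underline\epsilon\|^2)-\sum_i c_{2,i}(\|\tilde{\bar W}_i\|_F^2+\|\tilde{\underline W}_i\|_F^2)+\beta\), with \(c_1=\frac14\lambda_{\min}(\mathcal Q)\), \(c_{2,i}>0\), and \(\beta\) a finite constant built from \(\Delta_D\), \(\|W^*\|_F\) and \(\rho\). Because \(V\) is sandwiched between quadratic forms in the same variables, this gives \(\dot V\le-\alpha V+\beta\) with \(\alpha=\min\{2c_1/\lambda_{\max}(\mathcal P),\,2c_{2,i}/\lambda_{\max}(\Gamma_{i\pm}^{-1})\}\). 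The comparison lemma then gives \(V(t)\le e^{-\alpha t}V(0)+\beta/\alpha\), which is uniform ultimate boundedness of \(\bar\epsilon\), \(\underline\epsilon\), \(\tilde{\bar W}_i\) and \(\tilde{\underline W}_i\).

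The consequences follow directly. The transformed width is \(\bar z_i-\underline z_i=\bar\epsilon_i+\underline\epsilon_i\), so it is bounded. Using the reconstruction after Lemma~\ref{lem:distributed_sylvester}, the physical errors are \(\bar x_i-x=(T_i^{-1})^+\bar\epsilon_i+(T_i^{-1})^-\underline\epsilon_i\) and similarly for the lower error, so they are bounded by \(\|T_i^{-1}\|\) times the transformed errors. The physical widths are bounded for the same reason.

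For the ideal case, set \(\beta=0\). The remaining bound is \(\dot V\le-c_1(\|\bar\epsilon\|^2+\|\underline\epsilon\|^2)-\sum_ik_{r,i}(\underline g_{i+}\|\tilde{\bar W}_i\|_F^2+\underline g_{i-}\|\tilde{\underline W}_i\|_F^2)\le-\alpha V\). This gives exponential convergence of the weight errors to zero. The only excitation used is \(\underline g_{i\pm}>0\) from the stored stack, and no persistent excitation of \(\phi_i(t)\) is needed.

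I expect the main obstacle to be justifying the standing hypotheses used in the bounds. Lemma~\ref{lem:distributed_residual_bound} requires \((x,u)\in\Omega_x\times\Omega_u\), which Assumption~\ref{ass:compact_region} provides. The basis bounds of Assumption~\ref{ass:neural_interval} are uniform, so the cancellation and the estimates do not depend on the enclosure already holding. The replay richness is assumed to hold from the start. If it holds only after a finite collection time \(t_c\), I would first run the argument without the replay term on \([0,t_c]\), using the leakage bound to get \(\dot V\le-\alpha' V+\beta'\). I would then restart the estimate at \(t_c\) with the full bound. The ideal-case claim should be read for \(t\ge t_c\).
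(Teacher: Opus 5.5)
Your proposal is correct and follows the paper's proof essentially step for step: the same composite Lyapunov function, the same exact cancellation of the state--weight cross terms via the matching condition, the same Young, leakage and replay bounds leading to the comparison inequality $\dot V\le-\alpha V+\beta$ with the same rate structure (your $2c_1/\lambda_{\max}(\mathcal P)=q_D/(2p_D)$), and the same treatment of the ideal case and of the transformed and physical errors and widths. One small slip is that the physical-error amplification constant should be $\|(T_i^{-1})^{+}\|+\|(T_i^{-1})^{-}\|$, as in the paper's $\tau_i$, rather than $\|T_i^{-1}\|$, because the entrywise positive and negative parts can have larger norm than $T_i^{-1}$ itself; this does not affect the boundedness conclusion.
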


\begin{IEEEproof}

Consider the composite Lyapunov function introduced in the adaptation analysis, \(V:=\frac{1}{2}\bar\epsilon^\top\mathcal P\bar\epsilon+\frac{1}{2}\underline\epsilon^\top\mathcal P\underline\epsilon+\frac{1}{2}\sum_{i=1}^{N}\operatorname{tr}\!\left(\bar{\tilde W}_i^\top\Gamma_{i+}^{-1}\bar{\tilde W}_i\right)+\frac{1}{2}\sum_{i=1}^{N}\operatorname{tr}\!\left(\underline{\tilde W}_i^\top\Gamma_{i-}^{-1}\underline{\tilde W}_i\right)\). Since the ideal weights are constant, \(\dot{\bar{\tilde W}}_i=\dot{\bar W}_i\) and \(\dot{\underline{\tilde W}}_i=\dot{\underline W}_i\). Hence, \(\dot V=\bar\epsilon^\top\mathcal P\dot{\bar\epsilon}+\underline\epsilon^\top\mathcal P\dot{\underline\epsilon}+\sum_{i=1}^{N}\operatorname{tr}\!\left(\bar{\tilde W}_i^\top\Gamma_{i+}^{-1}\dot{\bar W}_i\right)+\sum_{i=1}^{N}\operatorname{tr}\!\left(\underline{\tilde W}_i^\top\Gamma_{i-}^{-1}\dot{\underline W}_i\right)\).

Using the stacked interval-error dynamics \eqref{eq:stacked_interval_error}, \(\dot{\bar\epsilon}=\mathcal M_D\bar\epsilon+\mathcal B_{+}\bar\eta-\mathcal B_{-}\underline\eta+\bar\Delta\) and \(\dot{\underline\epsilon}=\mathcal M_D\underline\epsilon+\mathcal B_{-}\bar\eta-\mathcal B_{+}\underline\eta+\underline\Delta\), we obtain \(\dot V=\bar\epsilon^\top\mathcal P\mathcal M_D\bar\epsilon+\underline\epsilon^\top\mathcal P\mathcal M_D\underline\epsilon+\bar\epsilon^\top\mathcal P\bar\Delta+\underline\epsilon^\top\mathcal P\underline\Delta+\bar\epsilon^\top\mathcal P\mathcal B_{+}\bar\eta+\underline\epsilon^\top\mathcal P\mathcal B_{-}\bar\eta-\bar\epsilon^\top\mathcal P\mathcal B_{-}\underline\eta-\underline\epsilon^\top\mathcal P\mathcal B_{+}\underline\eta+\sum_{i=1}^{N}\operatorname{tr}\!\left(\bar{\tilde W}_i^\top\Gamma_{i+}^{-1}\dot{\bar W}_i\right)+\sum_{i=1}^{N}\operatorname{tr}\!\left(\underline{\tilde W}_i^\top\Gamma_{i-}^{-1}\dot{\underline W}_i\right)\).

By \(\mathcal M_D^\top\mathcal P+\mathcal P\mathcal M_D=-\mathcal Q\), the linear-error terms satisfy \(\bar\epsilon^\top\mathcal P\mathcal M_D\bar\epsilon=-\frac{1}{2}\bar\epsilon^\top\mathcal Q\bar\epsilon\) and \(\underline\epsilon^\top\mathcal P\mathcal M_D\underline\epsilon=-\frac{1}{2}\underline\epsilon^\top\mathcal Q\underline\epsilon\).

From the adaptive-law construction, \(\bar\epsilon^\top\mathcal P\mathcal B_{+}\bar\eta+\underline\epsilon^\top\mathcal P\mathcal B_{-}\bar\eta=\sum_{i=1}^{N}\operatorname{tr}\!\left(\bar{\tilde W}_i^\top\bar\phi_i\xi_{i+}^\top\right)\), whereas \(-\bar\epsilon^\top\mathcal P\mathcal B_{-}\underline\eta-\underline\epsilon^\top\mathcal P\mathcal B_{+}\underline\eta=\sum_{i=1}^{N}\operatorname{tr}\!\left(\underline{\tilde W}_i^\top\underline\phi_i\xi_{i-}^\top\right)\).

The composite ICL adaptive laws give \(\operatorname{tr}\!\left(\bar{\tilde W}_i^\top\Gamma_{i+}^{-1}\dot{\bar W}_i\right)=-\operatorname{tr}\!\left(\bar{\tilde W}_i^\top\bar\phi_i\xi_{i+}^\top\right)-k_{r,i+}\operatorname{tr}\!\left(\bar{\tilde W}_i^\top\bar{\mathcal C}_i\right)-\sigma_{i+}\operatorname{tr}\!\left(\bar{\tilde W}_i^\top\bar W_i\right)\) and \(\operatorname{tr}\!\left(\underline{\tilde W}_i^\top\Gamma_{i-}^{-1}\dot{\underline W}_i\right)=-\operatorname{tr}\!\left(\underline{\tilde W}_i^\top\underline\phi_i\xi_{i-}^\top\right)-k_{r,i-}\operatorname{tr}\!\left(\underline{\tilde W}_i^\top\underline{\mathcal C}_i\right)-\sigma_{i-}\operatorname{tr}\!\left(\underline{\tilde W}_i^\top\underline W_i\right)\). Thus, all neural state--weight cross terms cancel exactly, while the ICL terms remain as additional negative finite-data corrections, yielding

\begin{equation}
\begin{aligned}
\dot V
={}&
-\frac{1}{2}\bar\epsilon^\top\mathcal Q\bar\epsilon
-\frac{1}{2}\underline\epsilon^\top\mathcal Q\underline\epsilon
+\bar\epsilon^\top\mathcal P\bar\Delta
+\underline\epsilon^\top\mathcal P\underline\Delta
\\
&-
\sum_{i=1}^{N}
k_{r,i+}
\operatorname{tr}\!\left(
\bar{\tilde W}_i^\top
\bar{\mathcal C}_i
\right)
-
\sum_{i=1}^{N}
k_{r,i-}
\operatorname{tr}\!\left(
\underline{\tilde W}_i^\top
\underline{\mathcal C}_i
\right)
\\
&-
\sum_{i=1}^{N}
\sigma_{i+}
\operatorname{tr}\!\left(
\bar{\tilde W}_i^\top\bar W_i
\right)
-
\sum_{i=1}^{N}
\sigma_{i-}
\operatorname{tr}\!\left(
\underline{\tilde W}_i^\top\underline W_i
\right).
\end{aligned}
\label{eq:distributed_Vdot}
\end{equation}


For the upper ICL contribution, \(\bar{\mathcal C}_i=\bar{\mathcal G}_i\bar{\tilde W}_i-\bar{\mathcal E}_i\) gives \(-k_{r,i+}\operatorname{tr}\!\left(\bar{\tilde W}_i^\top\bar{\mathcal C}_i\right)=-k_{r,i+}\operatorname{tr}\!\left(\bar{\tilde W}_i^\top\bar{\mathcal G}_i\bar{\tilde W}_i\right)+k_{r,i+}\operatorname{tr}\!\left(\bar{\tilde W}_i^\top\bar{\mathcal E}_i\right)\). Since \(\bar{\mathcal G}_i\succeq\underline g_{i+}I\), \(\operatorname{tr}\!\left(\bar{\tilde W}_i^\top\bar{\mathcal G}_i\bar{\tilde W}_i\right)\geq\underline g_{i+}\|\bar{\tilde W}_i\|_F^2\), whereas the Frobenius Cauchy--Schwarz inequality yields \(\operatorname{tr}\!\left(\bar{\tilde W}_i^\top\bar{\mathcal E}_i\right)\leq\|\bar{\tilde W}_i\|_F\|\bar{\mathcal E}_i\|_F\leq\bar\rho_i\|\bar{\tilde W}_i\|_F\). Using \(2ab\leq a^2+b^2\) with \(a=\sqrt{\underline g_{i+}}\|\bar{\tilde W}_i\|_F\) and \(b=\bar\rho_i/\sqrt{\underline g_{i+}}\) gives \(-k_{r,i+}\operatorname{tr}\!\left(\bar{\tilde W}_i^\top\bar{\mathcal C}_i\right)\leq-\frac{k_{r,i+}\underline g_{i+}}{2}\|\bar{\tilde W}_i\|_F^2+\frac{k_{r,i+}}{2\underline g_{i+}}\bar\rho_i^2\).

Similarly, \(\underline{\mathcal C}_i=\underline{\mathcal G}_i\underline{\tilde W}_i-\underline{\mathcal E}_i\), \(\underline{\mathcal G}_i\succeq\underline g_{i-}I\), and \(\|\underline{\mathcal E}_i\|_F\leq\underline\rho_i\) imply \(-k_{r,i-}\operatorname{tr}\!\left(\underline{\tilde W}_i^\top\underline{\mathcal C}_i\right)\leq-\frac{k_{r,i-}\underline g_{i-}}{2}\|\underline{\tilde W}_i\|_F^2+\frac{k_{r,i-}}{2\underline g_{i-}}\underline\rho_i^2\).

For the upper-weight leakage term, using \(\bar W_i=\bar{\tilde W}_i+\bar W_i^*\) gives \(-\sigma_{i+}\operatorname{tr}\!\left(\bar{\tilde W}_i^\top\bar W_i\right)=-\sigma_{i+}\|\bar{\tilde W}_i\|_F^2-\sigma_{i+}\operatorname{tr}\!\left(\bar{\tilde W}_i^\top\bar W_i^*\right)\). The Frobenius Cauchy--Schwarz inequality gives \(\left|\operatorname{tr}\!\left(\bar{\tilde W}_i^\top\bar W_i^*\right)\right|\leq\|\bar{\tilde W}_i\|_F\|\bar W_i^*\|_F\), and hence \(-\sigma_{i+}\operatorname{tr}\!\left(\bar{\tilde W}_i^\top\bar W_i\right)\leq-\sigma_{i+}\|\bar{\tilde W}_i\|_F^2+\sigma_{i+}\|\bar{\tilde W}_i\|_F\|\bar W_i^*\|_F\). Using \(2ab\leq a^2+b^2\), one obtains \(-\sigma_{i+}\operatorname{tr}\!\left(\bar{\tilde W}_i^\top\bar W_i\right)\leq-\frac{\sigma_{i+}}{2}\|\bar{\tilde W}_i\|_F^2+\frac{\sigma_{i+}}{2}\|\bar W_i^*\|_F^2\).

Similarly, \(\underline W_i=\underline{\tilde W}_i+\underline W_i^*\) gives \(-\sigma_{i-}\operatorname{tr}\!\left(\underline{\tilde W}_i^\top\underline W_i\right)=-\sigma_{i-}\|\underline{\tilde W}_i\|_F^2-\sigma_{i-}\operatorname{tr}\!\left(\underline{\tilde W}_i^\top\underline W_i^*\right)\), with \(\left|\operatorname{tr}\!\left(\underline{\tilde W}_i^\top\underline W_i^*\right)\right|\leq\|\underline{\tilde W}_i\|_F\|\underline W_i^*\|_F\). Therefore, \(-\sigma_{i-}\operatorname{tr}\!\left(\underline{\tilde W}_i^\top\underline W_i\right)\leq-\frac{\sigma_{i-}}{2}\|\underline{\tilde W}_i\|_F^2+\frac{\sigma_{i-}}{2}\|\underline W_i^*\|_F^2\).

Let \(q_D:=\lambda_{\min}(\mathcal Q)>0\) and \(p_D:=\lambda_{\max}(\mathcal P)>0\). By Lemma~\ref{lem:distributed_residual_bound}, \(\bar\epsilon^\top\mathcal P\bar\Delta\leq p_D\|\bar\epsilon\|\Delta_D\) and \(\underline\epsilon^\top\mathcal P\underline\Delta\leq p_D\|\underline\epsilon\|\Delta_D\). Young's inequality gives \(p_D\|\bar\epsilon\|\Delta_D\leq(q_D/4)\|\bar\epsilon\|^2+(p_D^2/q_D)\Delta_D^2\) and \(p_D\|\underline\epsilon\|\Delta_D\leq(q_D/4)\|\underline\epsilon\|^2+(p_D^2/q_D)\Delta_D^2\). Moreover, \(\bar\epsilon^\top\mathcal Q\bar\epsilon\geq q_D\|\bar\epsilon\|^2\) and \(\underline\epsilon^\top\mathcal Q\underline\epsilon\geq q_D\|\underline\epsilon\|^2\). Substitution of the preceding bounds into \eqref{eq:distributed_Vdot} gives

\begin{equation}
\begin{aligned}
\dot V
\leq{}&
-\frac{q_D}{4}
\left(
\|\bar\epsilon\|^2
+
\|\underline\epsilon\|^2
\right)
\\
&-
\frac{1}{2}
\sum_{i=1}^{N}
\left(
\sigma_{i+}
+
k_{r,i+}\underline g_{i+}
\right)
\|\bar{\tilde W}_i\|_F^2
\\
&-
\frac{1}{2}
\sum_{i=1}^{N}
\left(
\sigma_{i-}
+
k_{r,i-}\underline g_{i-}
\right)
\|\underline{\tilde W}_i\|_F^2
+
\Delta_V,
\end{aligned}
\label{eq:Vdot_uub}
\end{equation}

where \(\Delta_V:=2p_D^2\Delta_D^2/q_D+\frac{1}{2}\sum_{i=1}^{N}\left[\sigma_{i+}\|\bar W_i^*\|_F^2+\sigma_{i-}\|\underline W_i^*\|_F^2+\frac{k_{r,i+}}{\underline g_{i+}}\bar\rho_i^2+\frac{k_{r,i-}}{\underline g_{i-}}\underline\rho_i^2\right]\).

Define
\(c_{D,0}:=\frac{q_D}{2p_D}\),
\(c_{D,+}:=\min_i\frac{\sigma_{i+}+k_{r,i+}\underline g_{i+}}{\lambda_{\max}(\Gamma_{i+}^{-1})}\),
and
\(c_{D,-}:=\min_i\frac{\sigma_{i-}+k_{r,i-}\underline g_{i-}}{\lambda_{\max}(\Gamma_{i-}^{-1})}\).
Then
\(c_D:=\min\{c_{D,0},c_{D,+},c_{D,-}\}\).. Since \(\frac{1}{2}\bar\epsilon^\top\mathcal P\bar\epsilon+\frac{1}{2}\underline\epsilon^\top\mathcal P\underline\epsilon\leq\frac{p_D}{2}(\|\bar\epsilon\|^2+\|\underline\epsilon\|^2)\), and \(\frac{1}{2}\operatorname{tr}\!\left(\bar{\tilde W}_i^\top\Gamma_{i+}^{-1}\bar{\tilde W}_i\right)\leq\frac{\lambda_{\max}(\Gamma_{i+}^{-1})}{2}\|\bar{\tilde W}_i\|_F^2\), with the analogous relation for \(\underline{\tilde W}_i\), one has \(c_D>0\), and \eqref{eq:Vdot_uub} implies

\begin{equation}
\dot V\leq-c_DV+\Delta_V.
\label{eq:scalar_uub}
\end{equation}

Multiplying \eqref{eq:scalar_uub} by \(e^{c_Dt}\) gives \(\frac{d}{dt}(e^{c_Dt}V(t))\leq\Delta_Ve^{c_Dt}\). Integration from \(0\) to \(t\) yields \(e^{c_Dt}V(t)-V(0)\leq\frac{\Delta_V}{c_D}(e^{c_Dt}-1)\), and therefore \(V(t)\leq e^{-c_Dt}V(0)+\frac{\Delta_V}{c_D}(1-e^{-c_Dt})\). Consequently, \(\limsup_{t\to\infty}V(t)\leq\Delta_V/c_D\).

Since \(V\geq\frac{\lambda_{\min}(\mathcal P)}{2}(\|\bar\epsilon\|^2+\|\underline\epsilon\|^2)\), it follows that \(\limsup_{t\to\infty}\|\bar\epsilon(t)\|\leq\sqrt{\frac{2\Delta_V}{c_D\lambda_{\min}(\mathcal P)}}\) and \(\limsup_{t\to\infty}\|\underline\epsilon(t)\|\leq\sqrt{\frac{2\Delta_V}{c_D\lambda_{\min}(\mathcal P)}}\). Moreover, \(V\geq\frac{\lambda_{\min}(\Gamma_{i+}^{-1})}{2}\|\bar{\tilde W}_i\|_F^2\) and \(V\geq\frac{\lambda_{\min}(\Gamma_{i-}^{-1})}{2}\|\underline{\tilde W}_i\|_F^2\), so that \(\limsup_{t\to\infty}\|\bar{\tilde W}_i(t)\|_F\leq\sqrt{\frac{2\Delta_V}{c_D\lambda_{\min}(\Gamma_{i+}^{-1})}}\) and \(\limsup_{t\to\infty}\|\underline{\tilde W}_i(t)\|_F\leq\sqrt{\frac{2\Delta_V}{c_D\lambda_{\min}(\Gamma_{i-}^{-1})}}\). Hence the transformed estimation errors and all neural weight errors are uniformly ultimately bounded.

The role of ICL is explicit in \eqref{eq:Vdot_uub}: the quantities \(k_{r,i+}\underline g_{i+}\) and \(k_{r,i-}\underline g_{i-}\) provide additional coercive decay of the neural-weight errors using only a finite informative data stack. Therefore, persistent excitation of \(\bar\phi_i(t)\) and \(\underline\phi_i(t)\) for all \(t\geq0\) is not required. In the ideal replay case \(\bar{\mathcal E}_i=\underline{\mathcal E}_i=0\), and with \(\sigma_{i+}=\sigma_{i-}=0\), the replay contributions reduce exactly to \(-k_{r,i+}\operatorname{tr}\!\left(\bar{\tilde W}_i^\top\bar{\mathcal G}_i\bar{\tilde W}_i\right)\leq-k_{r,i+}\underline g_{i+}\|\bar{\tilde W}_i\|_F^2\) and \(-k_{r,i-}\operatorname{tr}\!\left(\underline{\tilde W}_i^\top\underline{\mathcal G}_i\underline{\tilde W}_i\right)\leq-k_{r,i-}\underline g_{i-}\|\underline{\tilde W}_i\|_F^2\). If, additionally, \(\bar\Delta=\underline\Delta=0\), then \(\Delta_V=0\), so \eqref{eq:scalar_uub} becomes \(\dot V\leq-c_DV\), from which \(V(t)\leq e^{-c_Dt}V(0)\). Consequently, \(\bar{\tilde W}_i(t)\to0\) and \(\underline{\tilde W}_i(t)\to0\) exponentially as \(t\to\infty\), thereby establishing exact neural-weight convergence under finite-data richness without persistent excitation.

The transformed interval width at node \(i\) is \(\omega_i^{z}:=\bar z_i-\underline z_i=\bar\epsilon_i+\underline\epsilon_i\). Therefore, \(\|\omega_i^{z}\|\leq\|\bar\epsilon_i\|+\|\underline\epsilon_i\|\), so the transformed interval width is also bounded.

For the physical-state reconstruction, \(\bar x_i=(T_i^{-1})^{+}\bar z_i-(T_i^{-1})^{-}\underline z_i\) and \(\underline x_i=(T_i^{-1})^{+}\underline z_i-(T_i^{-1})^{-}\bar z_i\). Since \(x=T_i^{-1}z_i\), \(\bar x_i-x=(T_i^{-1})^{+}\bar\epsilon_i+(T_i^{-1})^{-}\underline\epsilon_i\) and \(x-\underline x_i=(T_i^{-1})^{+}\underline\epsilon_i+(T_i^{-1})^{-}\bar\epsilon_i\). Thus, boundedness of the transformed errors implies boundedness of the physical estimation errors. In particular, defining \(\tau_i:=\|(T_i^{-1})^{+}\|+\|(T_i^{-1})^{-}\|\), gives \(\limsup_{t\to\infty}\max\{\|\bar x_i-x\|,\|x-\underline x_i\|\}\leq\tau_i\sqrt{\frac{2\Delta_V}{c_D\lambda_{\min}(\mathcal P)}}\).

Finally, \(\bar x_i-\underline x_i=[(T_i^{-1})^{+}+(T_i^{-1})^{-}](\bar z_i-\underline z_i)=|T_i^{-1}|(\bar z_i-\underline z_i)\), where \(|T_i^{-1}|:=(T_i^{-1})^{+}+(T_i^{-1})^{-}\). Hence the conditioning and sign structure of \(T_i^{-1}\) determine the amplification of the transformed interval width.
\end{IEEEproof}


\subsection{Distributed Interval Preservation}
\label{subsec:interval_preservation}

Theorem~\ref{thm:distributed_uub} establishes boundedness but does not
imply componentwise state enclosure. The latter requires the online
neural models themselves to preserve the nonlinear ordering.

\begin{assumption}
\label{ass:online_neural_enclosure}
For every \(i\in\mathcal V\), whenever
\(\underline z_i(t)\leq z_i(t)\leq\bar z_i(t)\), the online neural
models satisfy
\(\underline f_i(t)\leq f(x(t),u(t))\leq\bar f_i(t)\).
\end{assumption}

Assumption~\ref{ass:online_neural_enclosure} is an interval-validity
condition and is independent of weight convergence. In particular,
boundedness of \(\tilde{\bar W}_i\) and
\(\tilde{\underline W}_i\) does not by itself imply this ordering.

Define
\(q_i^{+}:=
B_i^{+}(\bar f_i-f)+B_i^{-}(f-\underline f_i)
+\bar q_i-E_i d\) and
\(q_i^{-}:=
B_i^{+}(f-\underline f_i)+B_i^{-}(\bar f_i-f)
+E_i d-\underline q_i\). By
Assumptions~\ref{ass:disturbance} and
\ref{ass:online_neural_enclosure},
\(q_i^{+}\geq0\) and \(q_i^{-}\geq0\). Let
\(q^{+}:=\operatorname{col}\{q_1^{+},\ldots,q_N^{+}\}\) and
\(q^{-}:=\operatorname{col}\{q_1^{-},\ldots,q_N^{-}\}\). The complete
upper and lower error systems can then be written as
\begin{equation}
\dot{\bar\epsilon}
=
\mathcal M_D\bar\epsilon+q^{+},
\qquad
\dot{\underline\epsilon}
=
\mathcal M_D\underline\epsilon+q^{-}.
\label{eq:positive_distributed_error}
\end{equation}

\begin{theorem}
\label{thm:distributed_interval}
Suppose Assumptions~\ref{ass:disturbance},
\ref{ass:graph},
\ref{ass:collective_detectability},
\ref{ass:communication},
\ref{ass:cooperative_realization}, and
\ref{ass:online_neural_enclosure} hold. If
\(\bar\epsilon(0)\geq0\) and
\(\underline\epsilon(0)\geq0\), then
\(\bar\epsilon(t)\geq0\) and
\(\underline\epsilon(t)\geq0\) for all \(t\geq0\).
\end{theorem}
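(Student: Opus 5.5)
The plan is to use the positive-systems argument on the network error model \eqref{eq:positive_distributed_error}. The key issue is that the nonnegativity of \(q^{\pm}\) is itself conditional: Assumption~\ref{ass:online_neural_enclosure} only gives \(q_i^{\pm}\geq0\) while the enclosure \(\underline z_i\leq z_i\leq\bar z_i\) already holds. The proof must therefore close a circular dependence, which I would do with a first-exit-time argument.

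First I would check that \(\mathcal M_D\) is Metzler. The diagonal blocks are \(M_i-\kappa d_iI_n\), and their off-diagonal entries coincide with those of \(M_i\), which are nonnegative by Assumption~\ref{ass:cooperative_realization}. The off-diagonal blocks are \(\kappa a_{ij}R_{ij}\). Each is either zero (when \(a_{ij}=0\)) or nonnegative, since \(\kappa>0\), \(a_{ij}>0\) and \(R_{ij}=T_iT_j^{-1}\geq0\) for every \((j,i)\in\mathcal E\). Hence \(e^{\mathcal M_Dt}\geq0\) for all \(t\geq0\), by the standard fact that \(\mathcal M_D+cI\geq0\) for large \(c\) and \(e^{\mathcal M_Dt}=e^{-ct}e^{(\mathcal M_D+cI)t}\). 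Hurwitz stability plays no role in this step.

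Next comes variation of constants:
\(\bar\epsilon(t)=e^{\mathcal M_Dt}\bar\epsilon(0)+\int_0^te^{\mathcal M_D(t-s)}q^{+}(s)\,ds\), and similarly for \(\underline\epsilon\). If \(q^{\pm}(s)\geq0\) on \([0,t]\), both terms are nonnegative.

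To break the circularity, let \(t^\ast:=\sup\{t\geq0:\bar\epsilon(s)\geq0,\ \underline\epsilon(s)\geq0\ \forall s\in[0,t]\}\) and assume \(t^\ast<\infty\). On \([0,t^\ast]\) the enclosure holds, so \(q^{\pm}\geq0\) there, and the formula gives nonnegativity up to \(t^\ast\). The main obstacle is extending this past \(t^\ast\), because a trajectory could touch the boundary at \(t^\ast\) and leave immediately. My first attempt is a perturbation argument. Consider the auxiliary systems driven by \(q^{\pm}+\varepsilon\mathbf 1\) with initial states \(\bar\epsilon(0)+\varepsilon\mathbf 1\), where \(\varepsilon>0\). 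For these, any component that reaches zero has derivative \(\sum_{k\neq j}[\mathcal M_D]_{jk}\epsilon_k+q_j+\varepsilon\geq\varepsilon>0\) at the first touching time, using the Metzler property and the nonnegativity of the other components there. This contradicts a first exit, provided \(q^{\pm}\geq0\) still holds at that time, which is true because the enclosure holds up to it. The perturbed solutions therefore stay strictly positive. Since the error system is locally Lipschitz, with \(f\) locally Lipschitz, bounded \(\phi\) and continuous data, they converge to the true errors as \(\varepsilon\to0\), which yields \(\bar\epsilon,\underline\epsilon\geq0\) for all \(t\) and contradicts \(t^\ast<\infty\).

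One technicality needs care. In the perturbed system the neural-model terms are evaluated along the perturbed estimates, so I would instead apply the touching-time argument directly to the original trajectory using the quasi-monotone (Kamke) comparison principle. The condition needed is that \(q_j^{\pm}\geq0\) on the boundary face where \(\epsilon_j=0\) and all other components are nonnegative; Assumption~\ref{ass:online_neural_enclosure} supplies exactly this. Assumptions~\ref{ass:graph}, \ref{ass:collective_detectability} and \ref{ass:communication} are used only to ensure that \(\mathcal M_D\) and the coupling are well defined and time-invariant.
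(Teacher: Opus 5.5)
Your core argument is the paper's proof: $\mathcal M_D$ is Metzler, so $e^{\mathcal M_Dt}\ge0$, and the variation-of-constants formula with $q^{\pm}\ge0$ gives nonnegativity. The paper stops there. It asserts $q^{\pm}(t)\ge0$ for all $t$ directly from Assumption~\ref{ass:online_neural_enclosure}, but that assumption only yields $q^{\pm}\ge0$ while the enclosure already holds. The paper's proof is therefore circular as written, and your first-exit step supplies what it omits. Your derivation of the Metzler property of $\mathcal M_D$ also fills a step the paper only asserts: you use the off-diagonal entries of $M_i$ together with $\kappa a_{ij}R_{ij}\ge0$, whereas Assumption~\ref{ass:cooperative_realization} does not state that $\mathcal M_D$ itself is Metzler.

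The reasoning in your last paragraph is off, however. Switching from the $\varepsilon$-perturbation to the boundary-face condition does not avoid evaluating the neural terms away from the true trajectory. The condition ``$q_j^{\pm}\ge0$ whenever $\bar\epsilon,\underline\epsilon\ge0$ and the $j$th component vanishes'' concerns the closed-loop vector field at every point of that face, and the invariance proof uses exactly that. For example, set $w:=\operatorname{col}\{\bar\epsilon,\underline\epsilon\}$ and $v:=\|\min\{w,0\}\|^2$. To get $\dot v\le cv$, one compares $q^{\pm}$ at $w$ with $q^{\pm}$ at $\max\{w,0\}$, i.e.\ at estimates the observer never took.

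Information along the actual trajectory alone cannot suffice. Take $\dot\epsilon=-\epsilon+q(t)$ with $q(t)=-t$ and $\epsilon(0)=0$. This satisfies ``$q\ge0$ whenever $\epsilon\ge0$'' along its own solution, since that happens only at $t=0$, where $q=0$. Yet $\epsilon(t)=-(t-1+e^{-t})<0$ for $t>0$.

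So both of your arguments need the same two hypotheses, and you should state them explicitly. First, Assumption~\ref{ass:online_neural_enclosure} must be read as a state-space condition: it holds for all admissible values of the estimates and weights with $\underline z_i\le T_ix\le\bar z_i$. Second, the full closed loop (plant, observers, weight laws) must be locally Lipschitz. That requires Lipschitz basis functions, not merely bounded ones, and it is not among the theorem's listed hypotheses. With these in place, your perturbation argument was already complete and the ``technicality'' disappears.

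Also, what you actually invoke is the Nagumo-type tangency criterion, not Kamke's comparison theorem. Kamke's quasi-monotonicity hypothesis on the full right-hand side is not available here, because $q^{\pm}$ depends nonlinearly on the estimates through $\bar\phi_i$ and $\underline\phi_i$.
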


\begin{IEEEproof}
By Assumption~\ref{ass:cooperative_realization},
\(\mathcal M_D\) is Metzler; hence
\(e^{\mathcal M_Dt}\geq0\) for all \(t\geq0\). The solutions of
\eqref{eq:positive_distributed_error} are
\(\bar\epsilon(t)=
e^{\mathcal M_Dt}\bar\epsilon(0)
+\int_{0}^{t}e^{\mathcal M_D(t-\tau)}
q^{+}(\tau)\,d\tau\) and
\(\underline\epsilon(t)=
e^{\mathcal M_Dt}\underline\epsilon(0)
+\int_{0}^{t}e^{\mathcal M_D(t-\tau)}
q^{-}(\tau)\,d\tau\).
Every term on the right-hand side is componentwise nonnegative.
Therefore
\(\bar\epsilon(t),\underline\epsilon(t)\geq0\)
for all \(t\geq0\).
\end{IEEEproof}

For the direct realization \(T_i=I_n\), one has \(z_i=x\),
\(\bar\epsilon_i=\bar x_i-x\), and
\(\underline\epsilon_i=x-\underline x_i\).
Theorem~\ref{thm:distributed_interval} therefore gives
\begin{equation}
\underline x_i(t)
\leq
x(t)
\leq
\bar x_i(t),
\qquad
\forall i\in\mathcal V,\quad t\geq0 .
\label{eq:direct_distributed_enclosure}
\end{equation}

The interval width
\(\omega^z:=
\operatorname{col}\{\omega_1^z,\ldots,\omega_N^z\}
=\bar\epsilon+\underline\epsilon\) satisfies
\(\dot\omega^z=\mathcal M_D\omega^z+q^{+}+q^{-}\).
Since \(\mathcal M_D\) is Metzler,
\(q^{+}+q^{-}\geq0\), and
\(\omega^z(0)\geq0\), one also has
\(\omega^z(t)\geq0\). Thus the upper and lower trajectories cannot
cross.

\begin{remark}
\label{rem:uub_not_interval}
Theorem~\ref{thm:distributed_uub} does not require
Assumption~\ref{ass:online_neural_enclosure};
Theorem~\ref{thm:distributed_interval} does. Hence neural-weight
boundedness and interval preservation are distinct properties.
Temporary violation of
\(\underline f_i\leq f\leq\bar f_i\)
does not invalidate the UUB result but removes the positive-system
guarantee in \eqref{eq:positive_distributed_error}.
\end{remark}

\section{Numerical Results}
\label{sec:results}

\subsection{Simulation Setup}
\label{subsec:simulation_setup}

Consider the nonlinear control-affine system
\begin{equation}
    \dot x=F(x)+G(x)u+Ed,
    \qquad
    x=\operatorname{col}\{x_1,x_2,x_3,x_4\}.
\label{eq:sim_original_system}
\end{equation}

The drift \(F(x)=\operatorname{col}\{F_1,F_2,F_3,F_4\}\) is given by
\(F_1=0.2x_1+0.8x_2+0.35\sin x_1+0.12x_2x_3+0.08x_1x_4^2\),
\(F_2=0.3x_1-0.4x_2+0.25\tanh(x_1x_3)+0.16x_2^3
+0.10\sin(x_2x_4)\),
\(F_3=-0.2x_3+0.7x_4+0.28x_1x_2+0.18\sin x_3
+0.07x_1^2x_4\), and
\(F_4=0.4x_3-0.5x_4+0.22[\cos(x_1x_4)-1]
+0.12x_2x_3^2+0.06\tanh(x_1+x_2)\).

The input vector is
\(G(x)=\operatorname{col}\{
0.8+0.15\cos x_2,\,
0.5+0.10x_1^2,\,
0.6+0.12\sin x_3,\,
0.35+0.08x_2^2\}\),
and
\(E=\operatorname{col}\{0.15,-0.10,0.08,0.12\}\).

To obtain the form in \eqref{eq:distributed_plant}, choose
\(A=\operatorname{blkdiag}\{A_1,A_2\}\), with
\(A_1=\left[\begin{smallmatrix}
0.2&0.8\\
0.3&-0.4
\end{smallmatrix}\right]\),
\(A_2=\left[\begin{smallmatrix}
-0.2&0.7\\
0.4&-0.5
\end{smallmatrix}\right]\),
and \(B=I_4\).
Defining \(f(x,u):=F(x)-Ax+G(x)u\) gives
\(\dot x=Ax+Bf(x,u)+Ed\).

The nonlinear residual \(f(x,u)=\operatorname{col}\{f_1,f_2,f_3,f_4\}\)
has components
\(f_1=0.35\sin x_1+0.12x_2x_3+0.08x_1x_4^2
+(0.8+0.15\cos x_2)u\),
\(f_2=0.25\tanh(x_1x_3)+0.16x_2^3+0.10\sin(x_2x_4)
+(0.5+0.10x_1^2)u\),
\(f_3=0.28x_1x_2+0.18\sin x_3+0.07x_1^2x_4
+(0.6+0.12\sin x_3)u\), and
\(f_4=0.22[\cos(x_1x_4)-1]+0.12x_2x_3^2
+0.06\tanh(x_1+x_2)+(0.35+0.08x_2^2)u\).

The known input is
\(u(t)=0.6\sin(0.7t)+0.25\sin(1.6t)\),
while the disturbance is
\(d(t)=0.15\sin(0.8t)+0.05\sin(2.1t)\),
with only the bounds
\(\underline d=-0.2\) and \(\bar d=0.2\)
available to the observers.

The plant is monitored by \(N=5\) scalar-output nodes.
Let \(e_k\) denote the \(k\)-th canonical basis vector of
\(\mathbb R^4\).
The sensing matrices are
\(C_1=C_5=e_1^\top\),
\(C_2=e_2^\top\),
\(C_3=e_3^\top\), and
\(C_4=e_4^\top\).
Hence, each local pair \((A,C_i)\) is nondetectable, whereas
\(C_{\mathcal V}=\operatorname{col}\{C_1,\ldots,C_5\}\)
yields an observable aggregate pair \((A,C_{\mathcal V})\), and thus
collective detectability holds.

The communication topology is the directed ring
\(1\rightarrow2\rightarrow3\rightarrow4\rightarrow5\rightarrow1\),
with unit edge weights, and is strongly connected.

\begin{figure}[H]
\centering
\begin{tikzpicture}[
    scale=0.78,
    >=Stealth,
    node/.style={
        circle,
        draw,
        inner sep=0pt,
        minimum size=5mm,
        font=\scriptsize
    },
    edge/.style={->,thin}
]

\node[node] (n1) at (90:1.25cm)   {1};
\node[node] (n2) at (18:1.25cm)   {2};
\node[node] (n3) at (-54:1.25cm)  {3};
\node[node] (n4) at (-126:1.25cm) {4};
\node[node] (n5) at (162:1.25cm)  {5};

\draw[edge] (n1) -- (n2);
\draw[edge] (n2) -- (n3);
\draw[edge] (n3) -- (n4);
\draw[edge] (n4) -- (n5);
\draw[edge] (n5) -- (n1);

\end{tikzpicture}
\caption{Directed communication network with \(N=5\) observer nodes.}
\label{fig:simulation_graph}
\end{figure}

For the direct cooperative realization, choose
\(T_i=I_4\),
\(K_i=4C_i^\top\), and
\(\kappa=1.5\),
for all \(i\in\mathcal V\).
Since \(A\) is Metzler and each \(K_iC_i\) is diagonal and nonnegative,
the network matrix
\(\mathcal M_D
=I_5\otimes A
-\operatorname{blkdiag}\{K_iC_i\}_{i=1}^{5}
-\kappa\mathcal L\otimes I_4\)
is Metzler.
For the selected topology and gains,
\(\max_{\lambda\in\sigma(\mathcal M_D)}
\operatorname{Re}(\lambda)=-0.2026<0\);
hence, \(\mathcal M_D\) is Hurwitz.

Each node employs \(N_\phi=15\) Gaussian radial basis functions
\(\phi_{ik}(\zeta)
=\exp[-\|\zeta-c_{ik}\|^2/1.2^2]\),
where
\(\zeta=\operatorname{col}\{x,u\}\).
The centers \(c_{ik}\) are distributed over the prescribed operating
region.
The adaptation parameters are
\(\Gamma_{i+}=\Gamma_{i-}=8I_{N_\phi}\) and
\(\sigma_{i+}=\sigma_{i-}=0.03\),
with
\(\bar W_i(0)=\underline W_i(0)=0\).

The initial state is
\(x(0)=\operatorname{col}\{0.6,-0.4,0.5,-0.3\}\).
The initial intervals are
\(\underline x_i(0)=x(0)-\delta_i\mathbf1_4\) and
\(\bar x_i(0)=x(0)+\delta_i\mathbf1_4\),
where
\(\delta_i=0.25+0.03(i-1)\),
\(i=1,\ldots,5\).
The simulation is conducted over \(t\in[0,25]\)~s.


\subsection{Simulation Results}
\label{sec:simulation_results}

\begin{figure}[!ht]
    \centering
    \includegraphics[width=0.90\columnwidth]
    {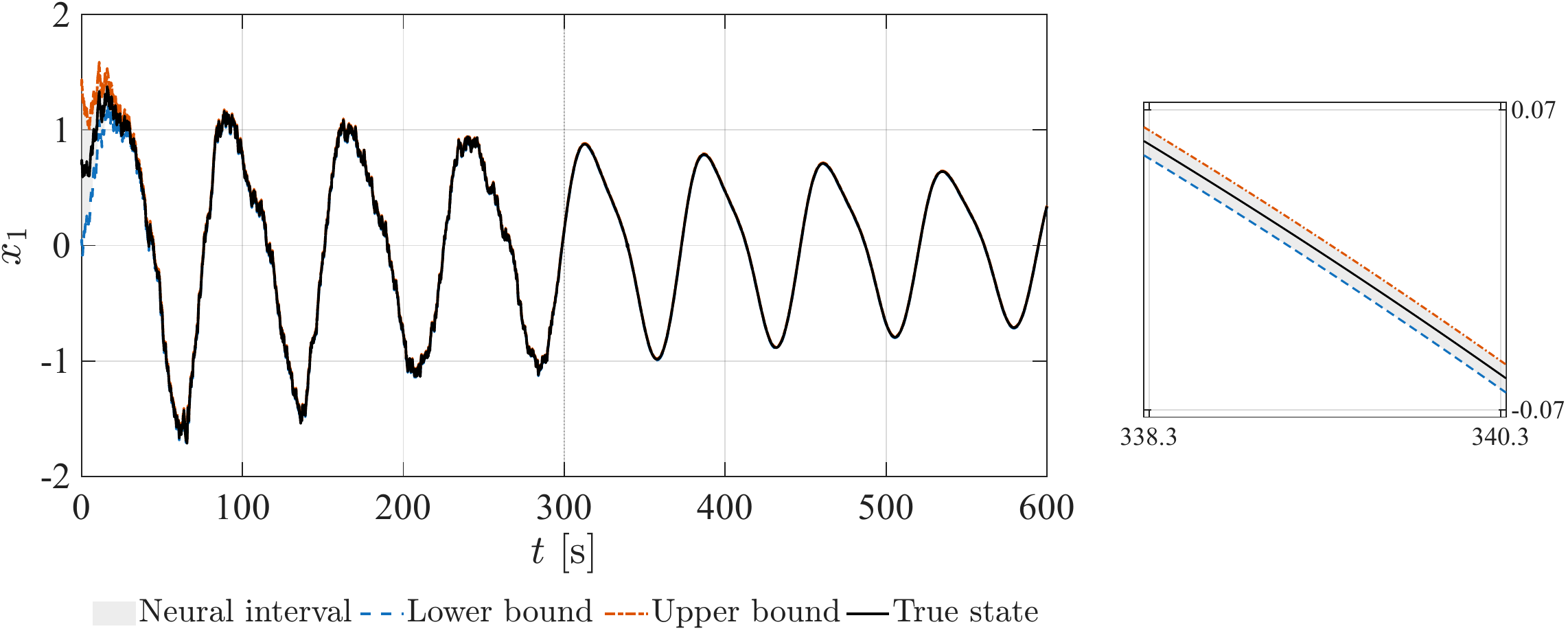}
    \caption{Neural interval estimate for state \(x_1\) at Node~3.}
    \label{fig:neural_interval_x1}
\end{figure}

\begin{figure}[!ht]
    \centering
    \includegraphics[width=0.90\columnwidth]
    {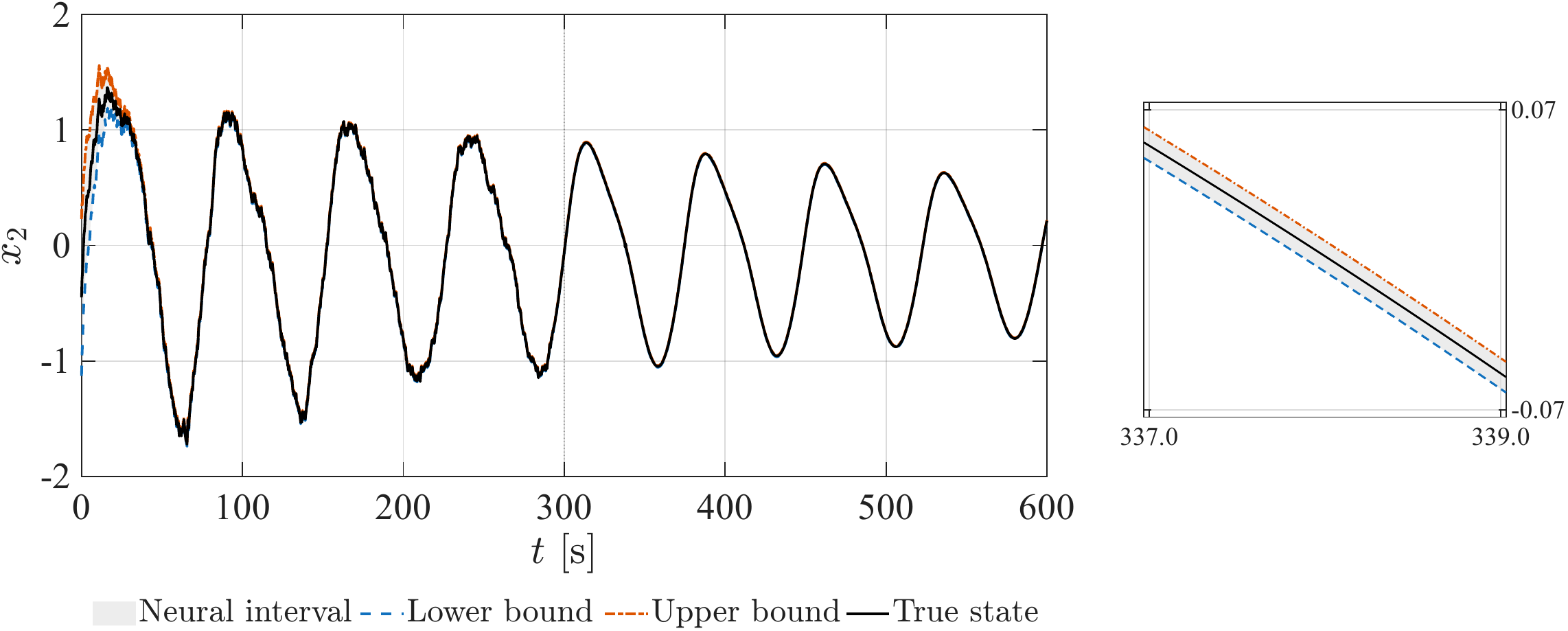}
    \caption{Neural interval estimate for state \(x_2\) at Node~3.}
    \label{fig:neural_interval_x2}
\end{figure}



\begin{figure}[!ht]
    \centering
    \includegraphics[width=\columnwidth]
    {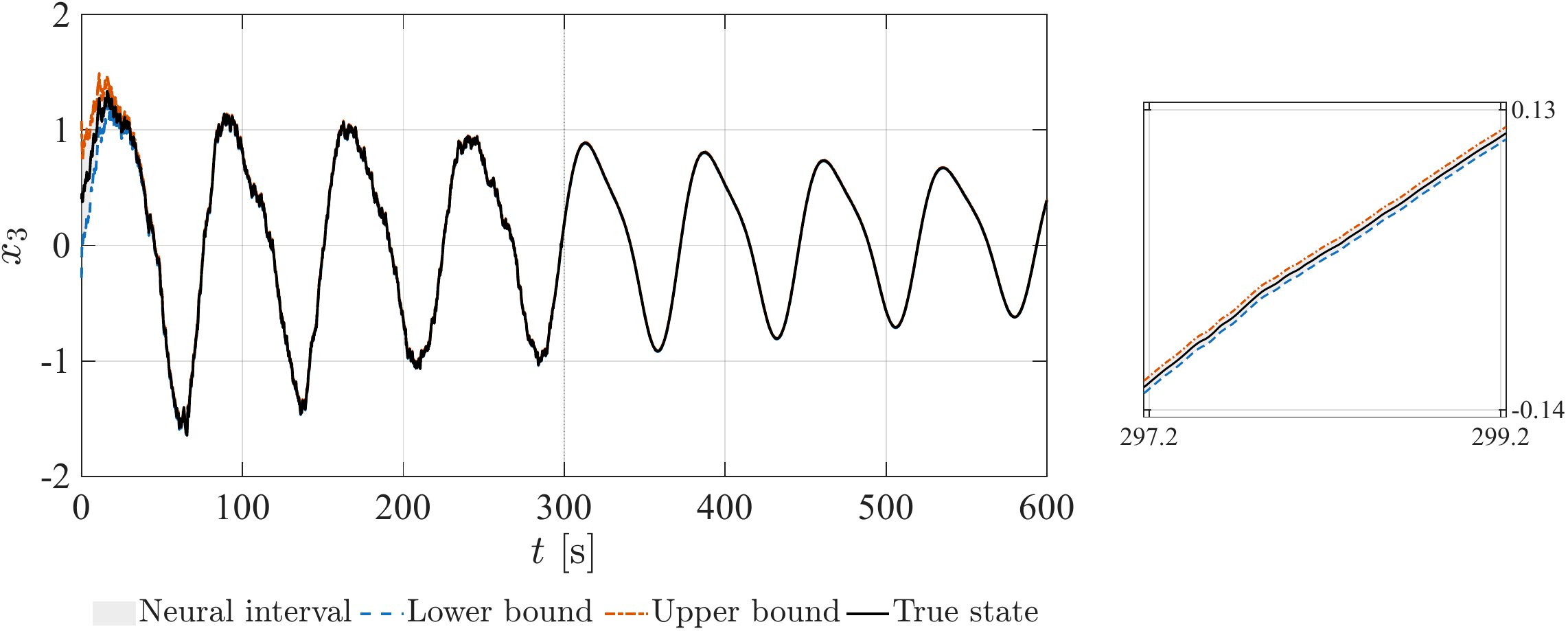}
    \caption{Neural interval estimate for state \(x_3\) at Node~3.}
    \label{fig:neural_interval_x3}
\end{figure}

\begin{figure}[!ht]
    \centering
    \includegraphics[width=\columnwidth]
    {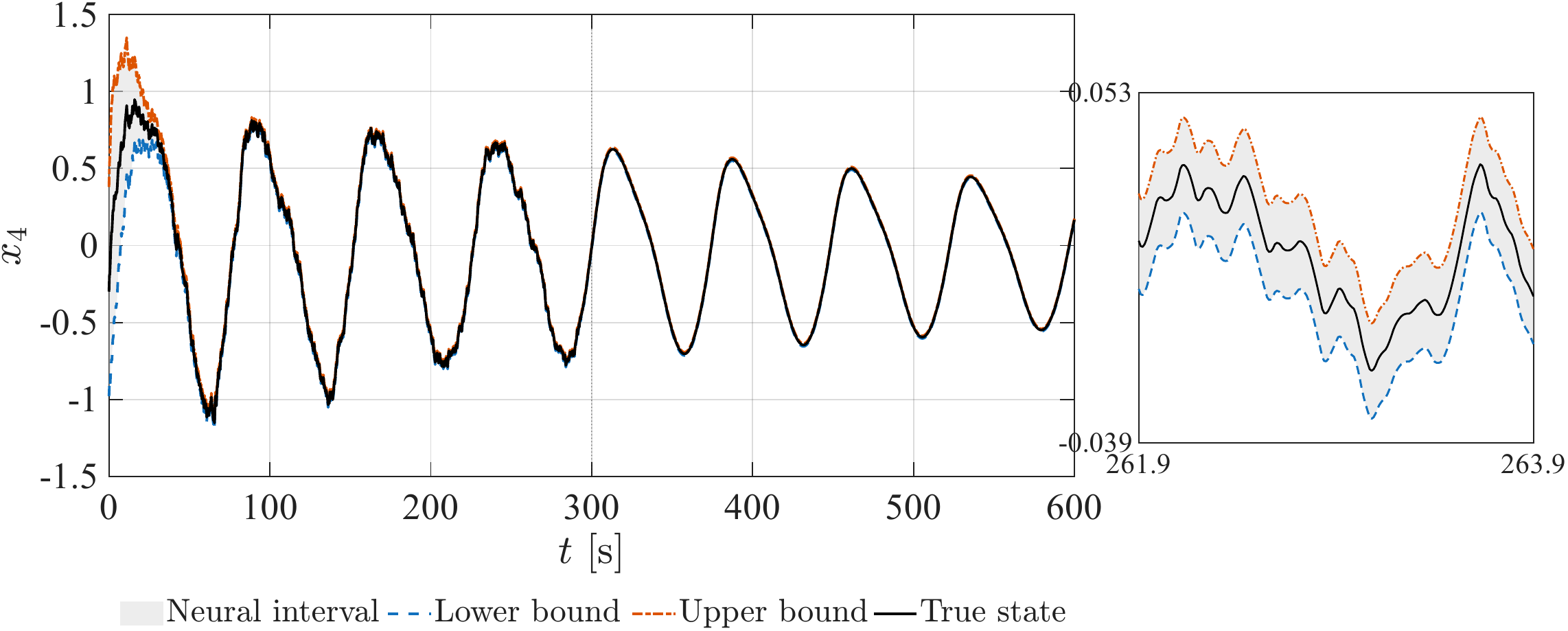}
    \caption{Neural interval estimate for state \(x_4\) at Node~3.}
    \label{fig:neural_interval_x4}
\end{figure}

\FloatBarrier


\begin{figure}[!ht]
    \centering
    \includegraphics[width=\columnwidth]
    {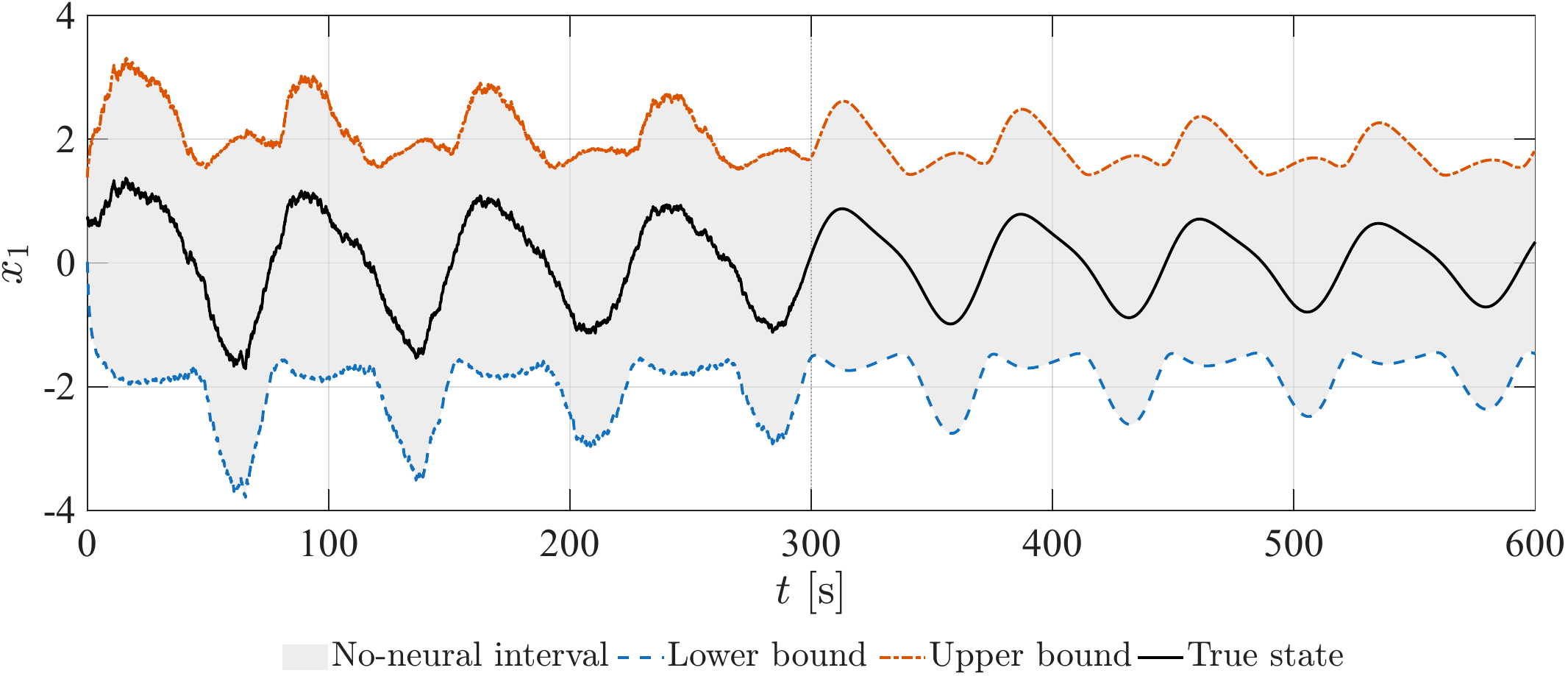}
    \caption{Nominal interval estimate without neural compensation for state \(x_1\) at Node~3.}
    \label{fig:noneural_interval_x1}
\end{figure}

\begin{figure}[!ht]
    \centering
    \includegraphics[width=\columnwidth]
    {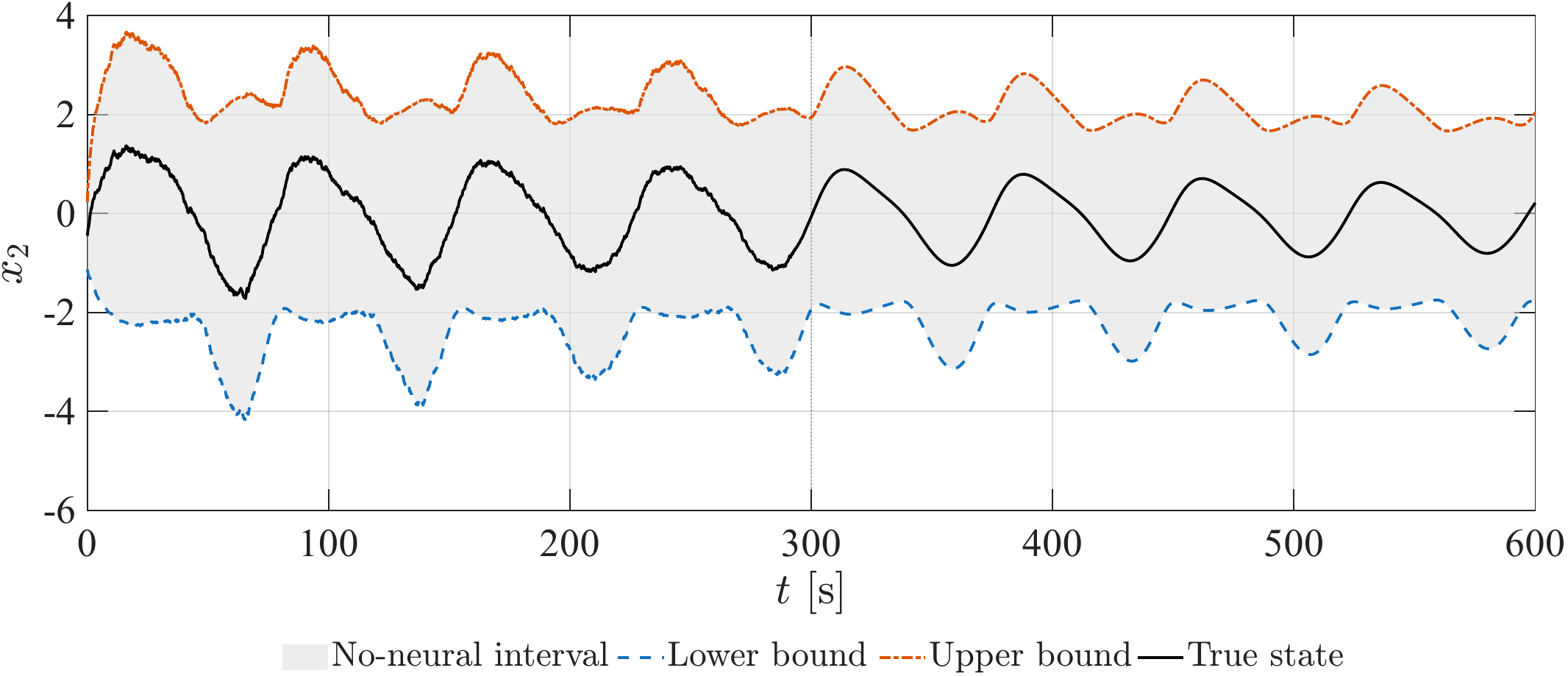}
    \caption{Nominal interval estimate without neural compensation for state \(x_2\) at Node~3.}
    \label{fig:noneural_interval_x2}
\end{figure}

\begin{figure}[!ht]
    \centering
    \includegraphics[width=\columnwidth]
    {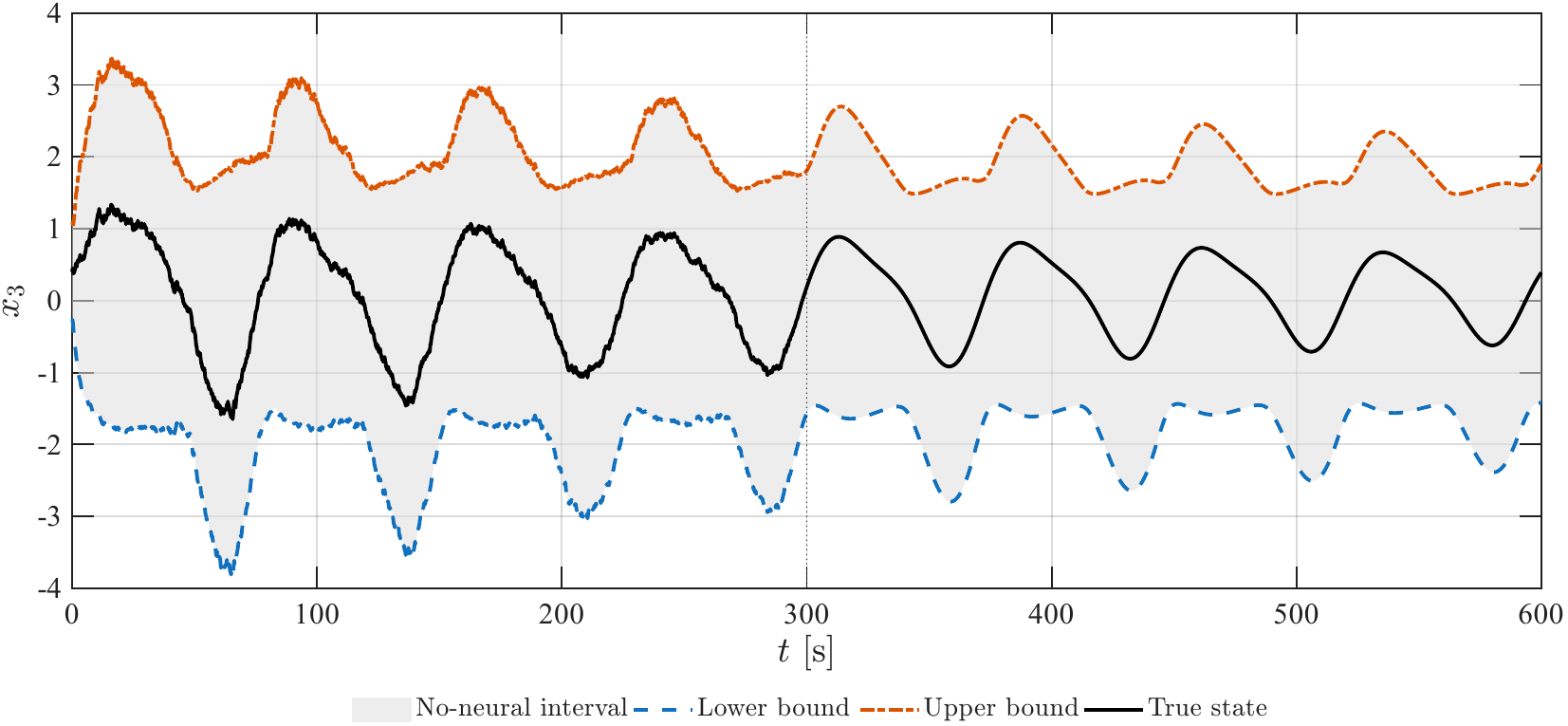}
    \caption{Nominal interval estimate without neural compensation for state \(x_3\) at Node~3.}
    \label{fig:noneural_interval_x3}
\end{figure}

\begin{figure}[!ht]
    \centering
    \includegraphics[width=0.90\columnwidth]
    {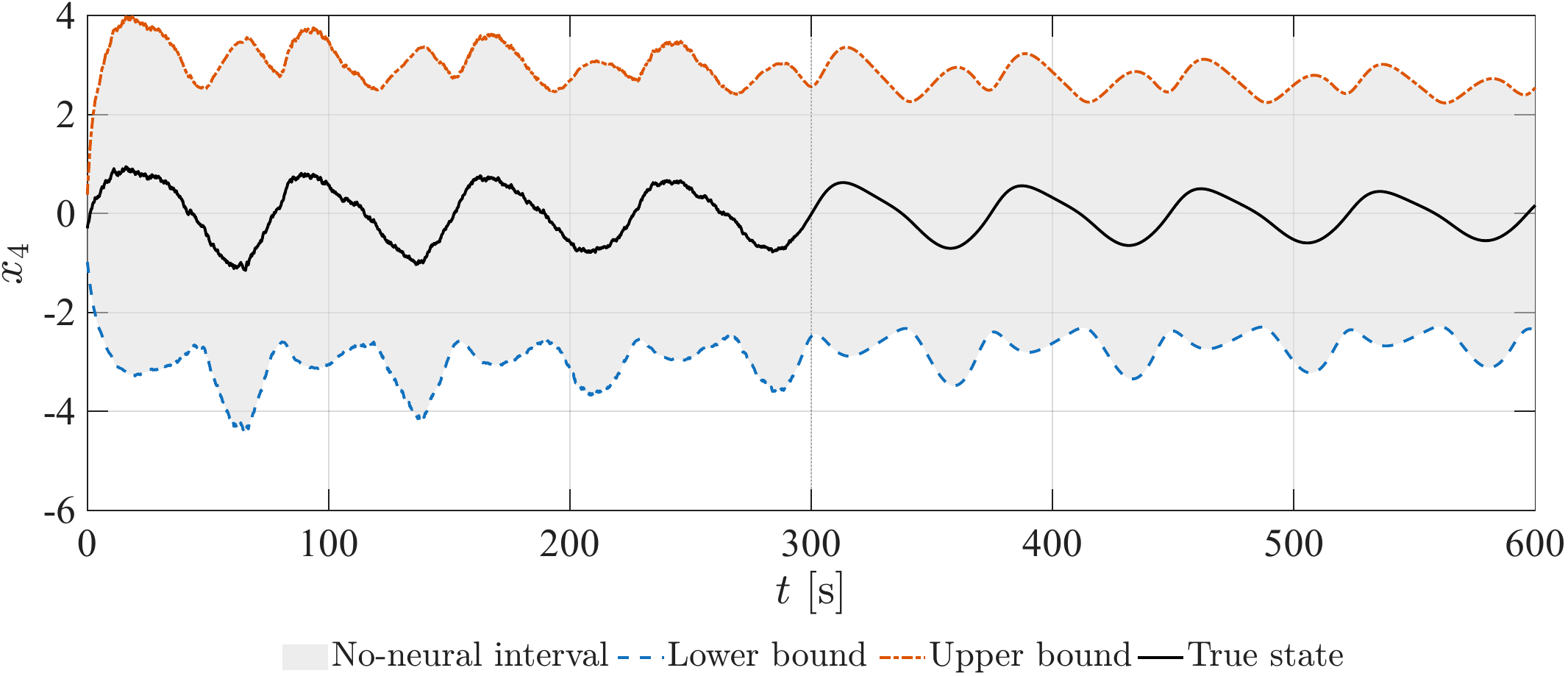}
    \caption{Nominal interval estimate without neural compensation for state \(x_4\) at Node~3.}
    \label{fig:noneural_interval_x4}
\end{figure}

\FloatBarrier


\begin{figure}[!ht]
    \centering
    \includegraphics[width=0.90\columnwidth]
    {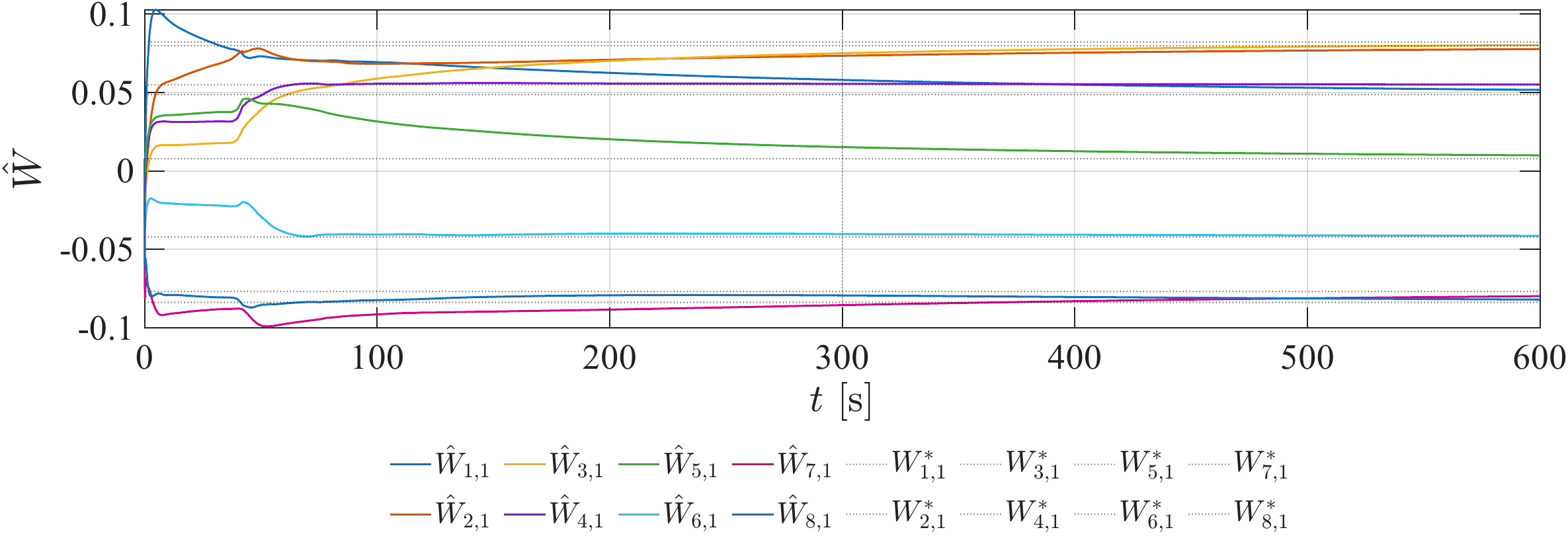}
    \caption{Adaptive neural-weight trajectories associated with \(f_1\) at Node~3.}
    \label{fig:weights_f1_node3}
\end{figure}

\begin{figure}[!ht]
    \centering
    \includegraphics[width=0.90\columnwidth]
    {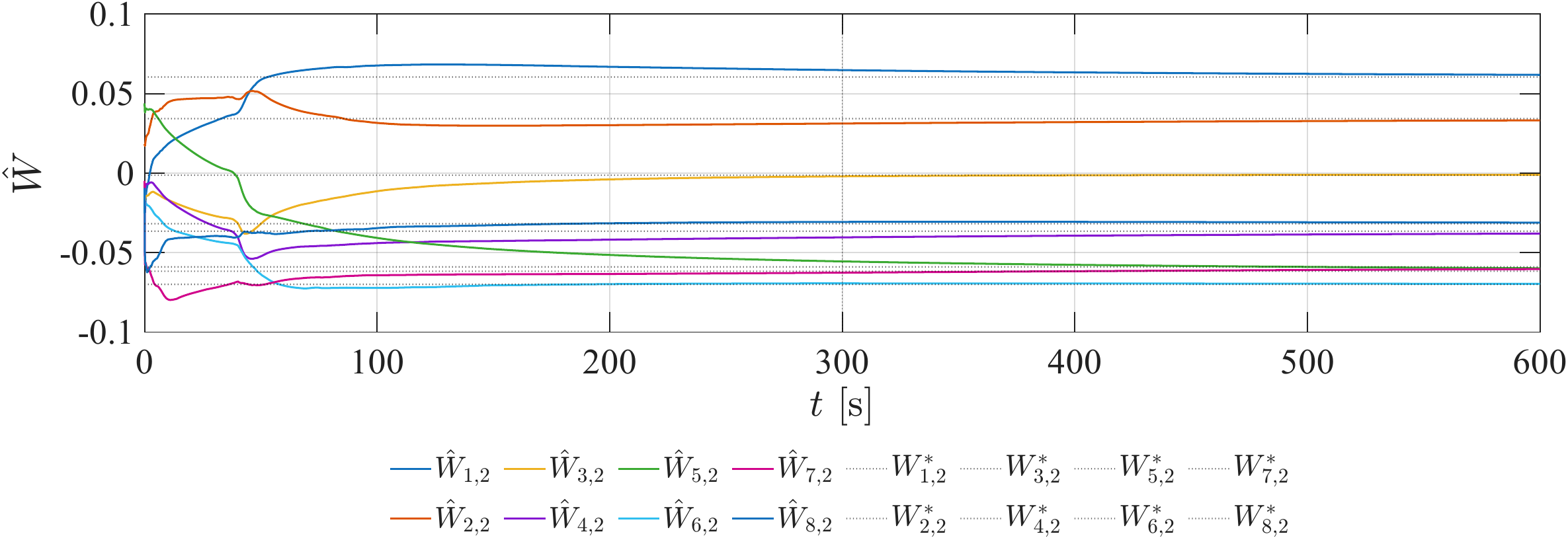}
    \caption{Adaptive neural-weight trajectories associated with \(f_2\) at Node~3.}
    \label{fig:weights_f2_node3}
\end{figure}

\begin{figure}[!ht]
    \centering
    \includegraphics[width=0.90\columnwidth]
    {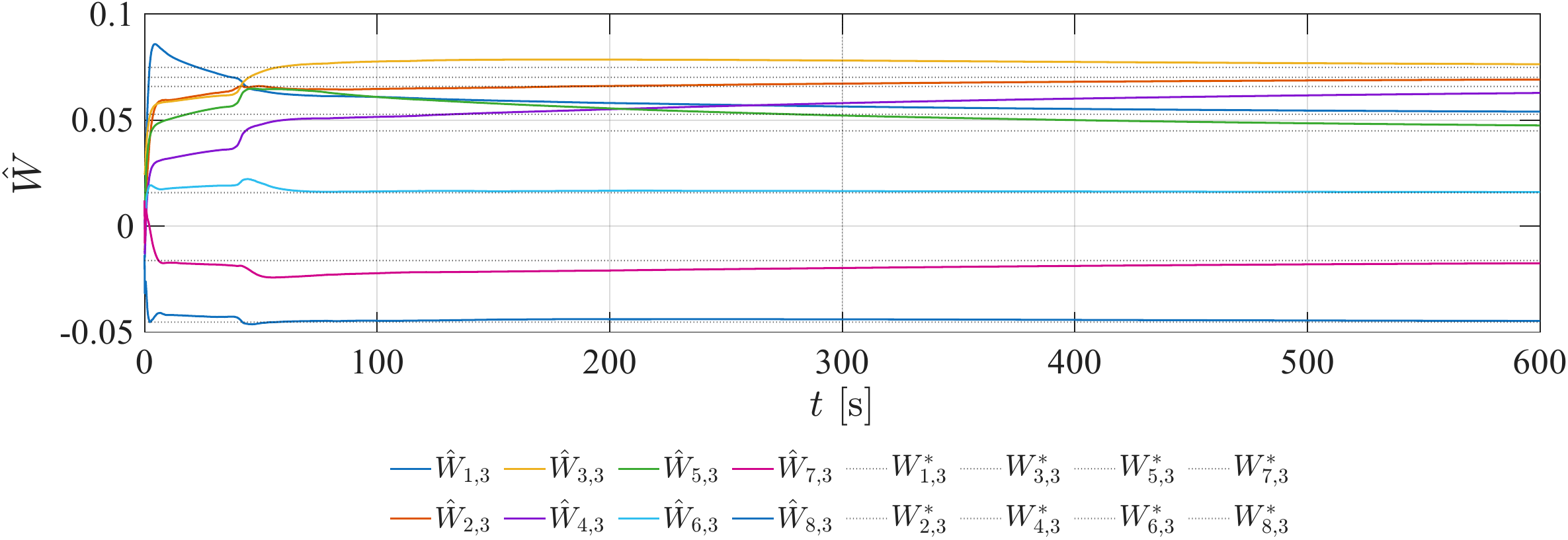}
    \caption{Adaptive neural-weight trajectories associated with \(f_3\) at Node~3.}
    \label{fig:weights_f3_node3}
\end{figure}

\begin{figure}[!ht]
    \centering
    \includegraphics[width=0.90\columnwidth]
    {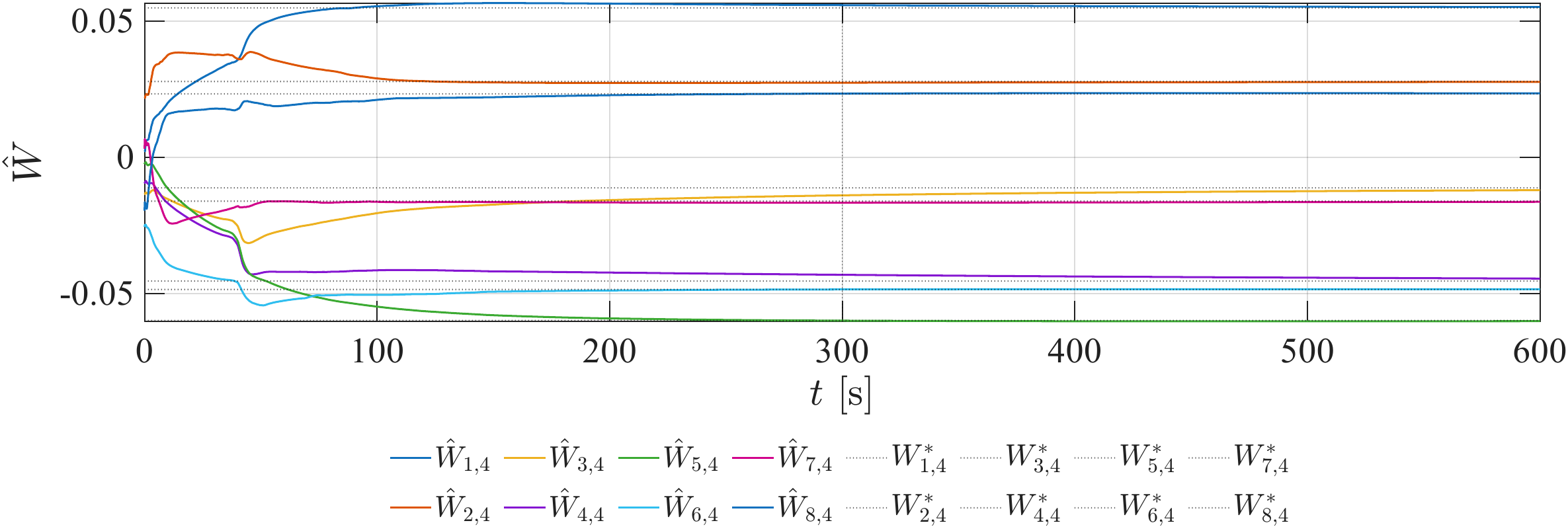}
    \caption{Adaptive neural-weight trajectories associated with \(f_4\) at Node~3.}
    \label{fig:weights_f4_node3}
\end{figure}

\FloatBarrier


\begin{figure}[!ht]
    \centering
    \includegraphics[width=0.90\columnwidth]
    {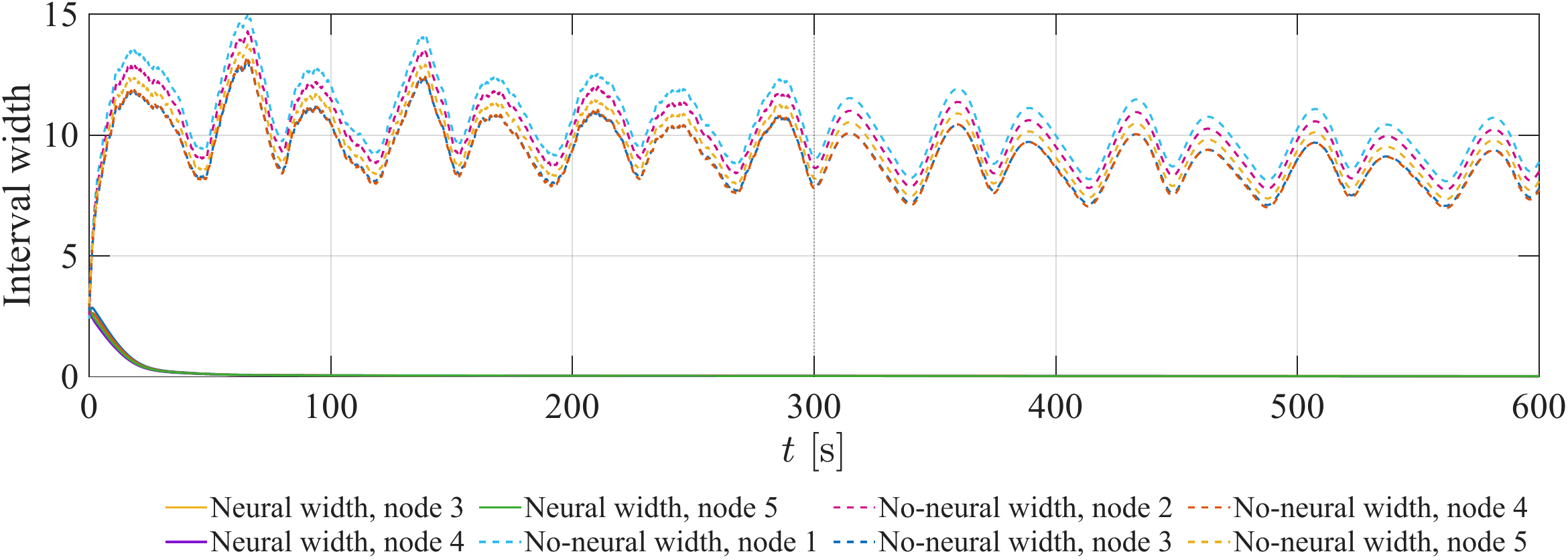}
    \caption{Interval-width comparison for all nodes with and without neural compensation.}
    \label{fig:all_node_width_comparison}
\end{figure}

\FloatBarrier

The neural interval estimates in Figs.~\ref{fig:neural_interval_x1}--\ref{fig:neural_interval_x4} show that the initial enclosures contract rapidly within approximately the first \(30\!-\!50\,\mathrm{s}\) and subsequently remain very tight for all four states. In contrast, the nominal interval observer without neural compensation in Figs.~\ref{fig:noneural_interval_x1}--\ref{fig:noneural_interval_x4} converges only to a considerably wider nonzero tube because the nominal model does not compensate for the unknown nonlinear dynamics. The adaptive-weight trajectories in Figs.~\ref{fig:weights_f1_node3}--\ref{fig:weights_f4_node3} become nearly stationary after approximately \(100\!-\!200\,\mathrm{s}\), while the network-wide comparison in Fig.~\ref{fig:all_node_width_comparison} shows that the neural-compensated interval widths are reduced by more than one order of magnitude relative to the nominal case after the initial transient. These tighter distributed intervals are directly relevant to fault-detection and isolation applications, since they provide less conservative residual thresholds for detecting sensor, actuator, or communication faults in networked systems.

\section{Conclusion}

This paper developed a distributed adaptive neural interval observer for unknown nonlinear systems. The proposed design guarantees bounded transformed and physical-state interval estimation errors together with bounded adaptive neural weights, while the stored-data learning mechanism improves parameter convergence without requiring persistent excitation of the online regressors. The simulation results verify that the neural-compensated intervals contract to significantly tighter bounds than the nominal interval observer and that the adaptive weights approach steady values under finite excitation. These results support the effectiveness of the proposed framework for distributed state estimation and fault-detection applications in networked nonlinear systems.

\section*{Acknowledgment}
The authors would like to thank Ho Chi Minh City University of Technology (HCMUT) and Vietnam National University Ho Chi Minh City (VNU-HCM) for supporting this research.



\begin{thebibliography}{99}

\bibitem{Gouze2000}
J.-L. Gouz\'e, A. Rapaport, and M.~Z. Hadj-Sadok,
``Interval observers for uncertain biological systems,''
\emph{Ecological Modelling},
vol.~133, nos.~1--2, pp.~45--56, 2000,
doi: 10.1016/S0304-3800(00)00279-9.

\bibitem{Efimov2013}
D. Efimov, T. Ra\"issi, S. Chebotarev, and A. Zolghadri,
``Interval state observer for nonlinear time varying systems,''
\emph{Automatica},
vol.~49, no.~1, pp.~200--205, 2013,
doi: 10.1016/j.automatica.2012.07.004.

\bibitem{ZhangShen2023}
Z. Zhang and J. Shen,
``A survey on interval observer design using positive system approach,''
\emph{Franklin Open},
vol.~4, Art.~no.~100031, 2023,
doi: 10.1016/j.fraope.2023.100031.

\bibitem{Mitra2018}
A. Mitra and S. Sundaram,
``Distributed observers for LTI systems,''
\emph{IEEE Trans. Autom. Control},
vol.~63, no.~11, pp.~3689--3704, Nov.~2018,
doi: 10.1109/TAC.2018.2798998.

\bibitem{Han2019}
W. Han, H.~L. Trentelman, Z. Wang, and Y. Shen,
``A simple approach to distributed observer design for linear systems,''
\emph{IEEE Trans. Autom. Control},
vol.~64, no.~1, pp.~329--336, Jan.~2019,
doi: 10.1109/TAC.2018.2828103.

\bibitem{Wang2025Distributed}
X. Wang, W. Xu, H. Su, Z. Gao, and G. Chen,
``Designing a completely distributed interval observer for the LTI
system,''
\emph{IEEE Trans. Autom. Control},
vol.~70, no.~3, pp.~1793--1808, Mar.~2025,
doi: 10.1109/TAC.2024.3471332.

\bibitem{Wang2026Descriptor}
X. Wang, F. Li, and Y. Zhou,
``Distributed interval observer design for descriptor systems,''
\emph{J. Franklin Inst.},
vol.~363, no.~8, Art.~no.~108677, 2026,
doi: 10.1016/j.jfranklin.2026.108677.

\bibitem{Yi2022}
Z. Yi, W. Xie, L. Liu, and B. Xu,
``A neural network adaptive interval observer design for nonlinear
systems,''
\emph{IET Control Theory Appl.},
vol.~16, no.~6, pp.~615--624, 2022,
doi: 10.1049/cth2.12258.



\bibitem{Chowdhary2010}
G. Chowdhary and E. N. Johnson,
``Concurrent learning for convergence in adaptive control without persistency of excitation,''
in \emph{Proc. 49th IEEE Conf. Decision and Control},
2010, pp. 3674--3679.

\bibitem{Parikh2019}
A. Parikh, R. Kamalapurkar, and W. E. Dixon,
``Integral concurrent learning: Adaptive control with parameter convergence using finite excitation,''
\emph{International Journal of Adaptive Control and Signal Processing},
vol. 33, no. 12, pp. 1775--1787, 2019.


\end{thebibliography}
\end{document}